\newcommand*{\rom}[1]{\expandafter\@slowromancap\romannumeral #1@}
\newtheorem{theorem}{Theorem}
\newtheorem{remark}{Remark}
\newtheorem{definition}{Definition}
\renewcommand{\v}[1]{\ensuremath{\mathbf{#1}}}
\title{Perfect Gradient Inversion in Federated Learning: A New Paradigm from the Hidden Subset Sum Problem
}
\author{
  Qiongxiu Li$^{1}$, Lixia Luo$^{2}$, Agnese Gini$^{3}$, Changlong Ji$^{4}$\\ \textbf{Zhanhao Hu$^{5}$, Xiao Li$^{6}$, Chengfang Fang$^{7}$, Jie Shi$^{7}$, Xiaolin Hu$^{6}$} \\
  $^1$Aalborg University, $^2$Hunan University of Science and Technology, $^3$University of Luxembourg, \\$^4$Institut Polytechnique de Paris, $^5$UC Berkey, $^6$Tsinghua University, $^7$Huawei International, Singapore\\
}
\begin{document}
\maketitle
\begin{abstract}
Federated Learning (FL) has emerged as a popular paradigm for collaborative learning among multiple parties. It is considered privacy-friendly because local data remains on personal devices, and only intermediate parameters---such as gradients or model updates---are shared. Although gradient inversion is widely viewed as a common attack method in FL, analytical research on reconstructing input training samples from shared gradients remains limited and is typically confined to constrained settings like small batch sizes. In this paper, we aim to overcome these limitations by addressing the problem from a cryptographic perspective. We mathematically formulate the input reconstruction problem using the gradient information shared in FL as the Hidden Subset Sum Problem (HSSP), an extension of the well-known NP-complete Subset Sum Problem (SSP). Leveraging this formulation allows us to achieve perfect input reconstruction, thereby mitigating issues such as dependence on label diversity and underperformance with large batch sizes that hinder existing empirical gradient inversion attacks. Moreover, our analysis provides insights into why empirical input reconstruction attacks degrade with larger batch sizes. By modeling the problem as HSSP, we demonstrate that the batch size \( B \) significantly affects attack complexity, with time complexity reaching \( \mathcal{O}(B^9) \). We further show that applying secure data aggregation techniques---such as homomorphic encryption and secure multiparty computation---provides a strong defense by increasing the time complexity to \( \mathcal{O}(N^9 B^9) \), where \( N \) is the number of local clients in FL. To the best of our knowledge, this is the first work to rigorously analyze privacy issues in FL by modeling them as HSSP, providing a concrete analytical foundation for further exploration and development of defense strategies.
\end{abstract}



%

\section{Introduction}
The era of big data presents big challenges. Nowadays user data is routinely collected and stored on portable devices like phones and tablets \cite{anderson2015technology,poushter2016smartphone}.  
Processing such a sheer amount of data over numerous personal devices poses severe privacy concerns as the involved data often contains personal information such as location and personal profile, etc.   
In response to these challenges, FL has emerged as a promising solution \cite{konevcny2016federateda,konevcny2016federatedb,mcmahan2016federated}. FL facilitates collaborative model training across multiple devices while safeguarding personal data on local devices.  Typically, the FL framework involves numerous local participants, each equipped with a local training dataset, alongside a centralized server responsible for coordination. Participants transmit intermediate model updates, such as gradients derived from local training data, to the server, which then aggregates these updates to refine the global model.

The privacy-preserving nature of FL is currently under debate. The claim that  FL is privacy-preserving
relies on the assumption that sharing intermediate model updates will not leak users' private training data. This assumption is, however, being challenged by numerous studies. One of the earliest works on privacy issues of FL is the DLG (Deep Leakage from Gradient) attack \cite{zhu2019deep}, which shows that  FL systems are susceptible to gradient inversion attacks, i.e., reversing the original training data from the observed gradient information.  The core concept of the DLG attack involves optimizing synthesized data to align its gradients with those produced by the real training data. Subsequent studies have refined and enhanced the DLG attack by increasing the attack efficiency, improving the resolution of the reconstructed data, or addressing the limitation of small batch sizes \cite{zhao2020idlg,geiping2020inverting,yin2021see,boenisch2021curious,geng2023improved,fowl2021robbing,zhu2020r,wei2020framework,yang2022using,zhao2022deep,xu2022agic}. In addition to the above optimization-based approaches which empirically demonstrate the privacy risks, another line of work seek to analytically assess the privacy threat in FL.  Geiping et al.  \cite{geiping2020inverting}, Fowl et al. \cite{fowl2021robbing} and Boenisch et al. \cite{boenisch2021curious} show that the training data can be perfectly reconstructed under specific conditions. These conditions may include scenarios where only one sample is activated within a mini-batch or cases where attackers can maliciously alter model parameters or architectures.

Conversely, some researchers argue that FL is not inherently vulnerable to privacy attacks. According to Wainakh et al. \cite{wainakh2022federated}, the effectiveness of existing attacks largely depends on specific and sometimes impractical assumptions.   As an example, the performance of gradient inversion attacks is significantly impacted by the batch size, i.e., how many training samples are processed in one iteration of the training process. Most existing attacks are tested under limited conditions with small batch sizes, or they require unrealistic assumptions such as no two samples in a batch sharing the same class label.   Notably, when attacks are conducted with larger, more realistic batch sizes, such as 128 or 256, their effectiveness significantly deteriorates \cite{yin2021see,geng2023improved}. 

This ongoing debate about FL's vulnerability to privacy attacks in practical scenarios has sparked our interest and prompted further investigation.  Specifically, we aim to answer the following research question:  \textbf{Is it possible to mathematically analyze the difficulty of the input reconstruction attack stemming from gradient sharing in FL? For example,  how does this complexity change with respect to key parameters like batch size or the number of neurons?}

To answer this research question, we consider an MLP with ReLU \cite{agarap2018deep} activation function. We mathematically show that the gradients of the first layer produced by a mini-batch, can be represented as the product of a weight matrix and an input matrix comprising the training samples within the mini-batch.  This weight matrix can be further decomposed into two matrices: one representing the partial gradient of the loss with respect to the output of the first layer, and the other as a binary matrix resulting from the activation function.
By formulating the problem in this manner, we establish a connection between the problem of input reconstruction attack in FL  and a classical cryptographic problem known as the hidden subset sum problem (HSSP) \cite{coron2020polynomial,coron2021provably,gini2022hardness}. Leveraging this connection, we further utilize cryptographic tools to assess the difficulty of input reconstruction attacks in FL.  Notably, the complexity of HSSP is linked to the well-studied subset sum problem (SSP), which is known to be NP-complete \cite{dasgupta2008algorithms}. 

Our analysis indicates that as the batch size increases, reconstructing the training samples from the gradient information becomes more challenging, or at least more computationally demanding. Moreover, our analysis using HSSP sheds light on the critical influence of batch size on the difficulty of these attacks. For certain parameter settings,  the time complexity of HSSP solvers notably increases, scaling approximately as $\mathcal{O}(B^9)$, where $B$ denotes the batch size. This provides theoretical insight into why gradient inversion attacks are less likely to succeed in scenarios with larger batch sizes.  As for the number of neurons, it also impacts the time complexity; however, the attack efficiency can be optimized by subsampling an appropriate number of rows from the gradient matrix (see Section \ref{subsec.timeMB} for details).

As a marriage of both cryptography and FL, our work has significant impacts on both communities. On one hand, deep learning models often behave like a black-box lacking reasoning and explainability. Our work calls for more FL researchers to reinspect the privacy of FL models from a cryptographic view, such that a concrete and rigorous analysis can be conducted, e.g., the time complexity characterized by the batch size in our work. 
On the other hand, our work inspires how variations of cryptographic building blocks and hard problems such as HSSP can be constructed in practice and how their general cases can be verified in FL.

\subsection{Paper contribution}
The primary contributions of our paper are summarized as follows:
\begin{enumerate}
    \item \textbf{Approach novelty}: To the best of our knowledge, this is the first work that mathematically characterizes the gradient leakage in FL as a cryptographic problem, specifically through the lens of HSSP. This approach offers a new perspective on security in FL systems.
    \item \textbf{Cryptographic analysis}:  By utilizing tools from HSSP, we analytically demonstrate that the feasibility of reconstructing input samples from gradients is intricately linked to the batch size, with a derived time complexity of $\mathcal{O}(B^9)$. This finding substantiates the difficulty of input reconstruction in scenarios involving large batch sizes.
    \item \textbf{Resolving key bottlenecks}: Via HSSP we address two major limitations commonly encountered in traditional empirical gradient inversion attacks.  Firstly, it achieves perfect reconstruction of input data,  thereby mitigating evaluation sensitivity issues across various metrics \cite{almohammad2010stego,pambrun2015limitations}. Second, unlike prior methods, our technique does not rely on the assumption that training samples within a mini-batch must come from diverse labels or classes \cite{yin2021see,geiping2020inverting}. 
    \item \textbf{Strategic defense mechanisms}: With the analytical complexity, we further show that applying secure aggregation techniques, such as secret sharing \cite{Cramer2015} or homomorphic encryption \cite{paillier1999public} to share global gradients aggregated over all participants instead of local gradients of the individual participants, significantly enhances security. Specifically,  it would directly increase the time complexity from $\mathcal{O}(B^9)$ to $\mathcal{O}(N^9B^9)$, where $N$ denotes the total number of participants, thereby strengthening the defense against privacy breaches. 
\end{enumerate}

\subsection{Paper outline} 
The rest of the paper is organized as follows:  Section \ref{sec.pre} introduces the fundamentals of FL and the corresponding HSSP, which are essential for understanding the problem formulation discussed subsequently. Section \ref{sec.proForm} outlines our approach to formulating the problem of reconstructing input training data from observed gradients and establishes its connection to HSSP.   Section \ref{sec.lattice} introduces the fundamentals of lattices and describes the existing attacks for solving HSSP and the corresponding time complexity. Section \ref{sec.mhssp} explains how to extend HSSP attacks to multidimensional HSSP (mHSSP) attacks specifically for FL and briefly discusses typical defense strategies. Numerical results and limitations are discussed in Section \ref{sec.numRes}.  Section \ref{sec.prior} introduces the related work. Finally, conclusions are given in Section \ref{sec.conclu}.

\section{Preliminaries}\label{sec.pre}
In this section, we review the necessary fundamentals for the rest of the paper. 

\subsection{Notations}
In this work, lowercase letters $x$, lowercase boldface letters $\v x$, and uppercase boldface letters $\v X$ stand for scalars, vectors, and matrices, respectively.  
The $i$-th entry of the vector  $\v x$ is denoted either as $x_i$ or $\v x [i]$. Similarly, the $(i,j)$-th entry of the matrix $\v X$ is either denoted as $x_{i,j}$ or  $\v X[i;j]$. $\v X[i; :]$ and $\v X[: ;i]$ denote the $i$-th row and column of matrix $\v X$, respectively. $(\v x_1, \ldots,\v x_n)$ and $(\v x_1; \ldots;\v x_n)$ denote the row-wise and column-wise concatenation of the corresponding vectors.  $\v X^{\top}$ denotes the transpose of $\v X$. $\operatorname{diag}\{\v X\} $ denotes taking the diagonal elements of the square matrix $\v X$. $\odot$ denotes element-wise multiplication. $\v 1_r$  denotes the vector of all ones of size $r$ and similarly $\v I_r$ denotes the identity matrix of size $r$.  

\subsection{Fundamentals of machine learning}
\noindent\textbf{Neural networks:}
Let $f_{\v W}: \mathbb{R}^m \rightarrow\{1, \cdots, k\}$ be a $k$-class classifier, defined as a sequence of layers parameterized by trainable weights $\v W$. Each layer comprises a linear operation paired with a non-linear activation function such as the ReLU activation. A deep neural network often consists of many different layers. The main objective of the model $f_{\v W}$ is to map an input to its desired ground-truth label, denoted as $\v x_i$ and $l_i$. As a result, the model weights $\v W$ are updated through an iterative training process, e.g., the mini-batch Stochastic Gradient Descent (SGD). It repeats the following steps for modifying the initial $\v W$: 1) sample a mini-batch of size $B$ from the training data $\left\{(\v x_j, l_j)\right\}_{j=1}^B$; 2) execute a forward pass through the model to generate predictions for this mini-batch; 3) calculate the difference between the predictions and the ground-truth labels, known as the loss $\mathcal{L}$; 4) compute the gradient of $\mathcal{L}$ w.r.t. the weights, called the weight gradient $\v G$, and update the weights accordingly.

 \hspace*{\fill} 

\noindent\textbf{Federated learning:}
Suppose there are $N$ different users and each with their own local training dataset. They would like to cooperate together to jointly learn a shared model based on their datasets. Collecting the datasets from individual users centrally would be costly and raise privacy concerns due to the sensitive personal information contained in the local data. FL allows these users to cooperate without the need to share their raw data directly.   Let $f_{\v W}^{(t)}(\cdot)$ be the model with its weights $\v W^{(t)}$ at iteration $t$. FL works through the following steps: 
\begin{enumerate}
    \item Initialization: at iteration  $t=0$, the central server, denoted as $C$, randomly initializes the weights $W^{(0)}$ of the global model.
    \item Local model training: at each iteration $t$, the server first randomly selects a subset of users  and each user $i$ receives the model $f_{\v W}^{(t)}(\cdot)$ from the server $C$. Each selected user $u_i$ then calculates the local gradient $\v G_i^{(t)}$ for $f_{\v W}^{(t)}(\cdot)$ based on one mini-batch sampled from their local dataset $\left\{(\v x_j, l_j)\right\}_{j=1}^B$.
    \item Model aggregation: server $C$ collects the local gradients from the selected users and then aggregates them to obtain an updated global model. The aggregation is often done by weighted averaging and  typically uniform weights are applied, i.e., 
\begin{align}
\v G^{(t)}&=\frac{1}{N} \sum_{i=1}^{N} \v G_i^{(t)} \label{eq.gave}  \\
\v  W^{(t+1)}&=\v W^{(t)}-\eta \v G^{(t)} \nonumber.
\end{align}
The last two steps are repeated for many iterations until the global model converges or a certain stopping criterion is met. 
\end{enumerate}

\noindent\textbf{Gradient inversion attacks:}
The gradient inversion attack typically works by iteratively refining an estimate of the private input data to align with the observed gradients generated by such data. For example, in the DLG attack \cite{zhu2019deep}, we consider $\v G_i^{(t)}= \nabla f_{\v W}^{(t)}(\v x_i)$ as the local gradient calculated based on the data $\v x_i$ at iteration $t$, and $\v G_i^{(t),\prime}= \nabla f_{\v W}^{(t)}(\v x^{\prime}_i)$  as the synthetic  gradient generated by the estimated data $\v x^{\prime}_i$. The core strategy of DLG is to minimize the discrepancy between these two gradients, formulated as follows:
\begin{align}
\v x_i^{\prime *} =\underset{\v x_i^{\prime}}{\arg \min }\big\|\v G_i^{(t)}-\v G_i^{(t),\prime}\big\|^2,
\end{align}
where $\v x_i^{\prime *}$ denotes the reconstructed input data. 
Many variations of gradient inversion attacks have been proposed in the literature, enhancing the techniques and methodologies used in such attacks ~\cite{zhao2020idlg,geiping2020inverting,yin2021see,boenisch2021curious,geng2023improved,fowl2021robbing,zhu2020r,wei2020framework,yang2022using,zhao2022deep,xu2022agic}.

\subsection{Fundamentals of Hidden Subset Sum Problem}
Before introducing the HSSP, we first introduce a related problem called SSP. 

\begin{definition}\textbf{Subset Sum Problem (SSP)}
Let $B$ be an integer. Given $x_1, \ldots, x_B \in \mathbb{Z}$ and $t \in \mathbb{Z}$, compute the unknown weights $a_1, \ldots, a_B \in \{0, 1\}$ such that
\begin{align*}
   t=a_1 x_1+ a_2 x_2+\cdots+a_B x_B 
\end{align*}
if they exist.
\end{definition}
Note that SSP is a well-known NP-complete problem, thus currently it is unknown if a deterministic polynomial algorithm exists \cite{dasgupta2008algorithms}.  
HSSP is fundamentally related to SSP by assuming both the mixture weights and private data are unknown/hidden. The formal definition is as follows: 

\begin{definition}\textbf{Hidden Subset Sum Problem (HSSP)}
Let $Q$ be an integer, and let $x_1, \ldots, x_B$ be unknown integers in $\mathbb{Z}_Q$. Let $M$ be an integer and $\mathbf{a}_1, \ldots, \mathbf{a}_B \in \{0,1\}^M$ be unknown binary vectors. Let $\mathbf{h}=\left(h_1, \ldots, h_M\right) \in$ $\mathbb{Z}^M$ satisfy:
\begin{align}\label{eq.hssp}
    \mathbf{h}= \mathbf{a}_1 x_1+ \mathbf{a}_2 x_2+\cdots+\mathbf{a}_B x_B \quad \mod ~Q.
\end{align}
Given the modulus $Q$ and the sample vector $\mathbf{h}$, the goal is to recover the unknowns, i.e., both the vector $\mathbf{x}=\left(x_1, \ldots, x_B\right)$ and the vectors $\mathbf{a}_i$'s, up to a permutation of the $x_i$'s and $\mathbf{a}_i$'s.
\end{definition}
We refer to $\v x$ as the hidden private data, and $\mathbf{a}_i$s as hidden weight vectors, respectively. 

\begin{definition} \textbf{Hidden Linear Combination Problem (HLCP)}\label{def.hlcp}
Let $Q$ be a positive integer,  and let $x_1, \ldots, x_B$ be unknown integers in $\mathbb{Z}_Q$. Let $\mathbf{a}_1, \ldots, \mathbf{a}_B \in \{0,1, ... , c\}^M$ be unknown vectors for a given $c\in \mathbb{N}^{+}$. Let $\mathbf{h}=\left(h_1, \ldots, h_M\right) \in$ $\mathbb{Z}^M$ satisfy:
$$
\mathbf{h}= \mathbf{a}_1 x_1+ \mathbf{a}_2 x_2+\cdots+\mathbf{a}_B x_B \quad \mod ~Q.
$$
Given $Q$, $c$ and $\mathbf{h}$, the hidden linear combination problem  consists in recovering both the unknown vector $\mathbf{x}=\left(x_1, \ldots, x_B\right)$ and the unknown vectors $\mathbf{a}_i$'s, up to a permutation of the $x_i$'s and $\mathbf{a}_i$'s.
\end{definition}
Note that HLCP is a natural extension of HSSP where the coefficients of the hidden weight vectors $\v a_i$' lie in a discrete interval $\{0, ... , c\}$ instead of $\{0, 1\}$, for a given $c\in \mathbb{N}^{+}$.

\subsection{Threat model}
We assume that the central server is semi-honest (or passive), it is not expected to alter the model parameters maliciously, but it does have the capability to gather information throughout the learning process.  The primary aim of this central server is to reconstruct the input training samples using the information it collects. 

\section{Problem formulation}\label{sec.proForm}
Upon examining equation \eqref{eq.gave}, we observe that at each iteration $t$, server $C$ can collect local gradients $\{\v G_i^{(t)}\}_{i=1,\ldots,N}$. For each user $i$ among $\{1,\ldots,N\}$, the mini-batch $\left\{(\v x_j, l_j)\right\}_{j=1}^B$ generates the gradient $\v G_i^{(t)}$. The extent to which gradient $\v G_i^{(t)}$ reveals information about the individual training samples $\left\{(\v x_j, l_j)\right\}_{j=1}^B$ largely depends on the complexity of the model. 
In this context, we consider a MLP which consists of several fully connected layers. Mathematically, a fully connected layer is represented by a linear transformation followed by an activation function; here, we specifically consider the ReLU activation function. Let $u$ denote the dimension of the input $\v x_j$, and let $M$ denote the number of neurons in the first fully connected layer. We define the weight matrix as $\v W \in \mathbb{R}^{M \times u}$ and the bias vector as $\v b \in \mathbb{R}^{M}$. For the $m^{\text{th}}$ neuron, $\v w_m^T$ represents the corresponding row in the weight matrix $\v W$, and $b_m$ is the component in the bias vector $\v b$. For an input $\v x_j$, the output of the $m^{\text{th}}$ neuron, denoted as $y_{m,j}$, is given by:
\begin{align} \label{eq.ydef}
    y_{m,j}= \mathrm{ReLU}(\v w_m^T \v x_j+ b_m) \nonumber\\ 
    =\max (0, \v w_m^T \v x_j+ b_m).
\end{align}
Let $\v g_{\v w_m^T}$ and $g_{b_m}$ denote the gradient of  the loss $\mathcal{L}$ w.r.t. the weight $\v w_{m}^T$ and bias $b_m$, respectively. 
Given a mini-batch of samples  $\left\{(\v x_j, l_j)\right\}_{j=1}^B$, the gradient $\v g_{\v w_m^T}$  equals the average of all gradients computed for each of the data points that make up the mini-batch, i.e., 
\begin{align}\label{eq.gwm}
\v g_{\v w_m^T} &=\frac{1}{B} \sum_{j=1}^B \frac{\partial \mathcal{L}}{\partial y_{m,j}} \frac{\partial y_{m,j}}{\partial \mathbf{w}_m^T}. 
\end{align}
Similarly, $g_{b_m}$ is given by 
\begin{align}\label{eq.gbm}
g_{b_m} &=\frac{1}{B} \sum_{j=1}^B \frac{\partial \mathcal{L}}{\partial y_{m,j}} \frac{\partial y_{m,j}}{\partial b_m}. 
\end{align}

\subsection{A special case where $B=1$}
It has been demonstrated that a single input data point $\v x$ can be reconstructed from the gradients within a network consisting of biased fully-connected layers, which may be preceded by other (possibly unbiased) fully-connected layers \cite{geiping2020inverting}. The underlying principle is straightforward. Consider the scenario where $B=1$ (a single data sample per batch),  and find a neuron where $y_m=\mathbf{w}_m^T \v x+b_m$, the above \eqref{eq.gbm} becomes 
\begin{align*}
\frac{\partial \mathcal{L}}{\partial b_m}=\frac{\partial \mathcal{L}}{\partial y_m} \frac{\partial y_m}{\partial b_m}=\frac{\partial \mathcal{L}}{\partial y_m}\nonumber
\end{align*}
as $\frac{\partial y_m}{\partial b_m}=1$. Given the fact that $\frac{\partial y_m}{\partial \mathbf{w}_m^T}=\v x^T$, \eqref{eq.gwm} becomes
\begin{align}
\frac{\partial \mathcal{L}}{\partial \mathbf{w}_m^T}&=\frac{\partial \mathcal{L}}{\partial y_m} \frac{\partial y_m}{\partial \mathbf{w}_m^T}=\frac{\partial \mathcal{L}}{\partial y_m} x^T =\frac{\partial \mathcal{L}}{\partial b_m} \v x^T. \nonumber
\end{align}
Consequently, if there is at least one neuron $m$ for which $\frac{\partial \mathcal{L}}{\partial b_m} \neq 0$, the private data sample $\v x$ can be perfectly reconstructed using the gradients $\frac{\partial \mathcal{L}}{\partial b_m}$ and $\frac{\partial \mathcal{L}}{\partial \mathbf{w}_m^T}$.

As batch sizes in practical applications are rarely limited to a single sample, it is essential to consider a more general scenario where the batch size, $B$, is greater than 1.

\subsection{A compact formulation when $B>1$}
In scenarios where $B>1$, a more complex approach is required to handle the increased data volume. To formulate the problem compactly, we collect the gradients of all neurons across the entire batch. By stacking these gradients together, we have
\begin{align}\label{eq.BweightX}
\v G_w=
\left(\v g_{\v w_1^T},
\v g_{\v w_2^T},
\dots,
\v g_{\v w_M^T}\right)
\in \mathbb{R}^{M\times u}.
\end{align}
Additionally, we stack all data from the mini-batch:
\begin{align}
\v X= \left(
\v x_1^T,
\v x_2^T,
\dots,
\v x_B^T \right)\in \mathbb{R}^{B\times u}\nonumber.
\end{align}
Then we have the following main result:
\begin{theorem}\label{thm.main}
The shared gradient $\v G_w$ produced by the mini-batch data $\v X$ can be compactly represented as a scaled version of the product of a weight matrix $\v D$ and $\v X$, formulated as:
   \begin{align}
    \v G_w =\frac{1}{B}\v D\v X. \label{eq.Gw}
\end{align}
Here, $\v D$ can be further decomposed into: 
\begin{align}
    \v D=\v L \odot \v R  , \nonumber
\end{align}
where $\v L\in \mathbb{R}^{M\times B}$ represents the partial gradient of the loss with respect to the output of the first layer and $\v R\in \mathbb{R}^{M\times B}$ is a binary matrix resulting from the activation
function. 
Hence, the goal of input reconstruction is to recover the unknown training data $\v X$ from the observed gradient $\v G_w$.
\end{theorem}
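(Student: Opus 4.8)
The plan is to unpack the per-neuron gradient in \eqref{eq.gwm} and show that each row of $\v G_w$ is a linear combination of the rows of $\v X$, with coefficients that factor into a loss-gradient part and a binary activation part. First I would compute the Jacobian $\partial y_{m,j}/\partial \v w_m^T$ of the ReLU output with respect to the weight row. Since $y_{m,j}=\max(0,\v w_m^T\v x_j+b_m)$, this derivative equals $\v x_j^T$ when the pre-activation $\v w_m^T\v x_j+b_m$ is positive and vanishes otherwise; I would encode this as $r_{m,j}\,\v x_j^T$, where $r_{m,j}\in\{0,1\}$ is the activation indicator for neuron $m$ on sample $j$.

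Next I would substitute this into \eqref{eq.gwm}, writing $\ell_{m,j}:=\partial\mathcal{L}/\partial y_{m,j}$ for the loss gradient, to obtain $\v g_{\v w_m^T}=\frac{1}{B}\sum_{j=1}^B \ell_{m,j}\,r_{m,j}\,\v x_j^T$. Recognizing $\v x_j^T$ as the $j$-th row of $\v X$, this expresses $\v g_{\v w_m^T}$---the $m$-th row of $\v G_w$---as the $m$-th row of $\frac{1}{B}\v D\v X$, where $\v D\in\mathbb{R}^{M\times B}$ has entries $\v D[m;j]=\ell_{m,j}\,r_{m,j}$. Stacking over $m=1,\ldots,M$ then yields $\v G_w=\frac{1}{B}\v D\v X$. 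Finally, defining $\v L[m;j]=\ell_{m,j}$ and $\v R[m;j]=r_{m,j}$ gives the factorization $\v D=\v L\odot\v R$ immediately, since each entry of $\v D$ is the product of the corresponding entries of $\v L$ and $\v R$; here $\v L$ collects the loss gradients with respect to the first-layer outputs and $\v R$ is binary.

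The computation is mostly routine bookkeeping, so I do not expect a substantive obstacle. The only delicate point is the non-differentiability of ReLU at the kink $\v w_m^T\v x_j+b_m=0$: I would handle this by adopting the standard subgradient convention (taking the derivative to be $0$ there), which is a measure-zero event under generic inputs and matches the behavior of automatic-differentiation frameworks. A secondary point worth stating carefully is the row/column stacking convention, to confirm that row $m$ of $\v D$ multiplies $\v X$ on the left so that the coefficient $\v D[m;j]$ is indeed paired with $\v x_j^T$, and that the dimensions $M\times B$ for $\v D$, $\v L$, $\v R$ and $B\times u$ for $\v X$ compose correctly into the $M\times u$ shape of $\v G_w$.
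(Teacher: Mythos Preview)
Your proposal is correct and arguably cleaner than the paper's own argument. The paper proceeds column-by-column in the input dimension: for each $k\in\{1,\ldots,u\}$ it introduces an auxiliary matrix $\v Y_k$ of partials $\partial y_{m,j}/\partial w_{m,k}$, writes the $k$-th column of $\v G_w$ as $\frac{1}{B}\operatorname{diag}(\v L\,\v Y_k^T)$, and then uses a Hadamard-product identity on diagonals together with the fact that $\v Y_k=\v X_k\odot\v R$ (where $\v X_k$ has repeated rows) to reduce this to $\frac{1}{B}(\v L\odot\v R)\,\v X[:;k]$. Your row-by-row approach over neurons $m$ reaches the same conclusion with less machinery: you compute $\partial y_{m,j}/\partial\v w_m^T=r_{m,j}\v x_j^T$ in one step and read off the $m$-th row of $\v D\v X$ directly, avoiding the diagonal extraction and the auxiliary $\v X_k,\v Y_k$. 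Both routes are elementary bookkeeping, but yours is shorter; the paper's column-wise slicing buys nothing extra here. Your remarks on the ReLU subgradient convention and the dimension check are appropriate and not addressed explicitly in the paper.
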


\begin{proof}
We define \(\v L\) and \(\v Y_k\) for \(k \in \{1, 2, \ldots, u\}\) as follows:
\begin{align}
\v L=
\begin{pmatrix}
\frac{\partial \mathcal{L}}{\partial y_{1, 1}}&\frac{\partial \mathcal{L}}{\partial y_{1, 2}}&\ldots&\frac{\partial \mathcal{L}}{\partial y_{1, B}}\\
\frac{\partial \mathcal{L}}{\partial y_{2, 1}}&\frac{\partial \mathcal{L}}{\partial y_{2, 2}}&\ldots&\frac{\partial \mathcal{L}}{\partial y_{2, B}}\\
\vdots&\vdots&\ddots&\vdots\\
\frac{\partial \mathcal{L}}{\partial y_{M, 1}}&\frac{\partial \mathcal{L}}{\partial y_{M, 2}}&\ldots&\frac{\partial \mathcal{L}}{\partial y_{M, B}}
\end{pmatrix} \in \mathbb{R}^{M\times B}\nonumber.
\end{align}
\begin{align}
\v Y_k =
\begin{pmatrix}
\frac{\partial y_{1, 1}}{\partial w_{1,k}}&\frac{\partial y_{1, 2}}{\partial w_{1,k}}&\ldots&\frac{\partial y_{1, B}}{\partial w_{1,k}}\\
\frac{\partial y_{2, 1}}{\partial w_{2,k}}&\frac{\partial y_{2, 2}}{\partial w_{2,k}}&\ldots&\frac{\partial y_{2, B}}{\partial w_{2,k}}\\
\vdots&\vdots&\ddots&\vdots\\
\frac{\partial y_{M, 1}}{\partial w_{M,k}}&\frac{\partial y_{M, 2}}{\partial w_{M,k}}&\ldots&\frac{\partial y_{M, B}}{\partial w_{M,k}}
\end{pmatrix}
 \in \mathbb{R}^{M\times B} \nonumber.
\end{align}

With \eqref{eq.ydef} we have 
\begin{align}
\frac{\partial y_{m,j}}{\partial w_{m,k}}=\begin{cases}
    \v x_j[k],& \text{if } \mathbf{w}_m^T \v x_j+b_m >0\\
    \v 0,              & \text{otherwise} 
\end{cases}.\nonumber
\end{align}
$\v R\in \mathbb{Z}^{M\times B}$ is a binary  matrix  where its $(m,j)$-th entry is given by 
\begin{align} 
r_{m,j}=\begin{cases}
    1,& \text{if } \mathbf{w}_m^T \v x_j+b_m >0\\
    0,              & \text{otherwise}
\end{cases}. \nonumber
\end{align}
Take the $k$-th column of $\v X$, i.e., $\v  X[: ;k]$, and use it to further define matrix $\v X_k$ as
\begin{align}
\v X_k=
\begin{pmatrix}
\v  X[: ;k]^T \\
\v  X[: ;k]^T\\
\vdots\\
\v  X[: ;k]^T\\
\end{pmatrix}
 \in \mathbb{R}^{M\times B} \nonumber,
\end{align} 
thus $\v Y_k$ can be represented as 
\begin{align}
    \v Y_k= \v X_k \odot \v R \nonumber
\end{align}
As for \eqref{eq.BweightX}, we can represent the $k$-th column of $\v G_w$  as
\begin{align}
\v G_w[: ;k] 
&=\frac{1}{B} \operatorname{diag} \big( \v L \v Y_k^T \big) \nonumber\\
&=\frac{1}{B} \operatorname{diag} \big( \v L  (\v X_k^T \odot \v R^T)  \big) \nonumber\\
&\stackrel{\text{$(a)$}}{=} \frac{1}{B} \operatorname{diag} \big( (\v L \odot \v R) \v X_k^T \big)\nonumber\\
&\stackrel{\text{$(b)$}}{=} \frac{1}{B}(\v L \odot \v R)\v  X[: ;k],
\end{align}
where (a) uses the fact that the $i$-th diagonal element of $(\v L \odot \v R) \v X_k^T$ is $\langle \mathbf{L}[i; :]\odot \mathbf{R}[i; :], \v X_k[: ;i]\rangle=\langle \mathbf{L}[i; :], \v X_k[: ;i]\odot \mathbf{R}^{\top}[: ;i]\rangle,$  where the right side of the equation is the $i$-th diagonal element of $\mathbf{L}\cdot (\v X^{\top}_k\odot\mathbf{R}^{\top})$,
(b) uses the fact that all columns of  $\v X_k^T $ are identical.
Stacking all columns $\v G_w[: ;k]$s for $k\in\{1,2,\ldots,u\}$ together completes the proof. 
\end{proof} 

A similar rule follows when considering the bias, we thus have  
\begin{align}
\v g_b &=\frac{1}{B} (\v L \odot \v R)\v  1_B=\frac{1}{B} \v D\v  1_B \label{eq.Gb},
\end{align}
where 
\begin{align}\label{eq.BweightBias}
\v g_b=
\left(
g_{b_1},
g_{b_2},
\dots,
g_{b_M}\right) \in \mathbb{R}^{M}.
\end{align}

Hence, Theorem \ref{thm.main} implies that the shared gradient information $\v G_w$ can be formulated as the product of the unknown weight matrix $\v D$ and the unknown training input $\v X$, where recovering the latter is the goal of input reconstruction. In the subsequent section we will explore how this connects to mHSSP, providing a cryptographic lens to examine the challenges of reconstructing inputs from shared gradients in FL.

\subsection{Connection to HSSP}\label{subsec.conn}
We now explain how to cast the problem of inverting input training data from the shared gradient as an instance of HSSP. Before that we first introduce an extended version of HSSP for handling multidimensional data. This extension is crucial since the private data in machine learning is typically multidimensional.
\begin{definition}\textbf{Multidimensional Hidden Subset Sum Problem (mHSSP)}
Let $Q$ be an integer, and let $\{\v x_i\}_{i\in \{1,\ldots,B\}}$ be vectors in $\mathbb{Z}_Q^u$. Let $\mathbf{a}_1, \ldots, \mathbf{a}_B \in \mathbb{Z}^M$ be vectors with components in $\{0,1\}$. $\forall j=\{1,\dots,u\} $, let $\v h_j \in \mathbb{Z}^M $ satisfying:
\begin{align}
   \mathbf{h}_j= \mathbf{a}_1 x_{1,j}+ \mathbf{a}_2 x_{2,j}+\cdots+\mathbf{a}_B x_{B,j} \quad \mod ~Q \nonumber
\end{align}
Given the modulus $Q$ and the sample matrix $\mathbf{H}=[\v h_1; \ldots; \v h_u] \in$ $\mathbb{Z}^{M\times u}$, the goal is to recover the unknowns: i.e., both the vectors $\{\mathbf{x}_{i}\}_{i\in \{1,\ldots,B\}}$ and the weights vectors $\mathbf{a}_i$'s, up to a permutation of the $\v x_i$'s and $\mathbf{a}_i$'s.
\end{definition}
Further denote $\v X=[\v x_1^T,\ldots,\v x^T_B]\in \mathbb{Z}^{B\times u}$ as the matrix of  hidden private data and $\mathbf{A}=[\mathbf{a}_1;\dots;\mathbf{a}_B] \in \{0,1\}^{M\times B}$ as the hidden weight matrix.
Thus, the above can be compactly expressed as 
\begin{align} \label{eq.mhssp}
    \v H= \v A \v X \quad \mod ~Q
\end{align}
Accordingly, the definition of multidimensional HLCP (mHLCP) is the same as the above except each entry of the hidden weight matrix $\v A$ lies in a discrete interval $\{0, ... , c\}$ instead of $\{0, 1\}$. \\ \hspace*{\fill} 
We present the following observation as a remark to emphasize its theoretical implications:
\begin{remark} \textbf{Analogy between gradient-based input reconstruction and mHSSP}:
    Given the gradient information $\v G_w = \frac{1}{B}\v D\v X$, an input reconstruction attack aims to recover the unknown training input $\v X$ from the shared gradients $\v G_w$. We note that this input reconstruction problem closely parallels the definition of mHSSP or mHLCP, where the equation $\v H = \v A \v X \mod Q$ defines the problem, and the goal is to recover the hidden weight $\v A$ and input $\v X$ using the known information including $\v H$ and the modulus $Q$.
\end{remark}
We identify two primary distinctions between our setup and the classical mHSSP (and mHLCP): the latter is traditionally formulated within a modulo domain and operate on integers. To bridge these differences, common strategies include scaling floating-point numbers to integers and employing modular additive inverses to address negative values. For the modulo operation, we select a sufficiently large modulus $Q$ to ensure that the results of $\v H = \v A \v X \mod Q$ and $\v H = \v A \v X$ are equivalent.
To simplify the analysis, we assume that the weights in matrix $\v D$ are binary, that is, matrix $\v L$ is composed entirely of ones, and $\v D$ is equivalent to $\v R$. We will later demonstrate that, even under this simplified framework, mHSSP remains challenging to solve, particularly with a large batch size $B$.

\begin{remark} 
\textbf{Applicability of mHSSP to FL protocols sharing model weights instead of gradients}:
In FL, the protocols may vary in their approach to data sharing; while some opt to share gradients, others share model weights directly \cite{mcmahan2017communication,li2020federated,li2021model}. This adjustment modifies the gradient sharing equation \eqref{eq.gave} as follows \footnote{For simplicity, we assume that each update involves only one batch of training samples per iteration, as handling multiple updates with several batches would make it more difficult to reconstruct inputs.}: 
\begin{align*}
\v W_i^{(t+1)}&=\v W_i^{(t)}-\eta \v G_i^{(t)}  \\
\v  W^{(t+1)}&=\frac{1}{N} \sum_{i=1}^{N} \v W_i^{(t)} \nonumber.
\end{align*}
Given the initialized model $\v W^{(0)}$, the processes of sharing gradients and sharing model weights effectively converge to the same outcome. This is because knowing the gradients is sufficient to compute the model weights. 
\end{remark}
Hence, our analysis of reconstructing inputs from gradients is applicable across various sharing protocols, highlighting the broad applicability of our approach.

\section{HSSP and attacks}\label{sec.lattice}
To cryptographically analyze the input reconstruction from gradient sharing in FL using HSSP,  we first introduce the necessary fundamentals of lattices and then introduce existing lattice attacks for solving HSSP.  We start with the Nguyen-Stern attack (NS attack), the very first HSSP attack raised by Nguyen and Stern \cite{nguyen1999hardness}. After that we will introduce two advanced attacks based on the NS attack by Coron and Gini based on different principles \cite{coron2020polynomial,coron2021provably}. Finally, we will discuss the time complexities of these three attacks.   
\subsection{Fundamentals of lattices}
A lattice in $\mathbb{R}^M$ is defined as a discrete subgroup and can be defined as follows \cite{cassels1997introduction,micciancio2002complexity}:
\begin{definition} \textbf{Lattice}
Given $B$ linearly independent vectors $\mathbf{b}_1, \dots, \mathbf{b}_B$ in $\mathbb{R}^M(M\geq B)$, the lattice generated by these basis vectors is the set
\begin{equation*}
\mathcal{L}(\mathbf{b}_1,\dots,\mathbf{b}_B)=\left\{\sum_{i=1}^{B}x_i\mathbf{b}_i|x_i\in \mathbb{Z},i=1,\dots, B\right\},
\end{equation*}
where $\mathcal{L}$ represents the lattice, and $B$ is called its rank, generally denoted as $\dim (\mathcal{L})$ . If $B = M$, the lattice is called full-rank.
\end{definition}

Denote $\mathbf{B}$ as an $M\times B$ base matrix with $\mathbf{b}_1,\dots,\mathbf{b}_B$ as its columns, then
\begin{equation*}
\mathcal{L}(\mathbf{B})=\mathcal{L}(\mathbf{b}_1,\dots,\mathbf{b}_B)=\left\{ \mathbf{B}\cdot\v x|\v x\in \mathbb{Z}^B\right\}.
\end{equation*}
It can be demonstrated that two lattices $\mathcal{L}(\mathbf{B})$ and $\mathcal{L}(\mathbf{B'})$ are identical if and only if there exists an unimodular matrix $\mathbf{U} \in \text{GL}_B(\mathbb{Z})$ such that $\mathbf{B}\mathbf{U} = \mathbf{B'}$. The determinant of a lattice, $\mathcal{L}(\mathbf{B})$, is defined as $\sqrt{\det(\mathbf{B}^\top \mathbf{B})}$. It can be verified that if $\mathcal{L}(\mathbf{B}) = \mathcal{L}(\mathbf{B'})$, then $\det(\mathcal{L}(\mathbf{B})) = \det(\mathcal{L}(\mathbf{B'}))$. Geometrically, the determinant of a lattice is inversely proportional to its density; hence, a smaller determinant corresponds to a denser lattice \cite{micciancio2002complexity}.

In this paper, we focus primarily on integer lattices, i.e., sublattices of $\mathbb{Z}^M$. 
\begin{definition} \textbf{Orthogonal lattice}
Let $\mathcal{L}\subseteq \mathbb{Z}^M$ be a lattice. Its orthogonal lattice is given as
$$ \mathcal{L}^{\bot}:=\{\mathbf{y}\in \mathbb{Z}^M \arrowvert \forall \v x\in \mathcal{L},\langle \v x,\mathbf{y}\rangle=0\},$$
where $\langle\ ,\ \rangle$ is the inner product of $\mathbb{R}^M.$
Its orthogonal lattice modulo $Q$ is given as
$$ \mathcal{L}_Q^{\bot}:=\{\mathbf{y}\in \mathbb{Z}^M \arrowvert \forall \v x\in \mathcal{L},\langle \v x,\mathbf{y}\rangle\equiv 0 \mod Q\}.$$
\end{definition}

The completion of a lattice $\mathcal{L}$ is the lattice $\bar{\mathcal{L}}= \mathrm{Span}_{\mathbb{R}}(\mathcal{L})\cap \mathbb{Z}^M=(\mathcal{L}^{\bot})^{\bot},$ where $\mathrm{Span}_{\mathbb{R}}(\mathcal{L})=\{\mathbf{B}\cdot \v x | \v x\in \mathbb{R}^B\}$ for $\mathcal{L}=\mathcal{L}(\mathbf{B}).$ Note that $(\mathcal{L}^{\bot})^{\bot}$ contains the original lattice $\mathcal{L}$. In addition, $\dim (\mathcal{L}^{\bot})+\dim (\mathcal{L})=M$ \cite{nguyen1999hardness} and $\dim (\mathcal{L}_Q^{\bot})=M$ \cite{gini2022hardness}.

\begin{definition} \textbf{Shortest vector}
Let $\Vert \cdot \Vert$ denote the Euclidean norm. The shortest non-zero vector $\mathbf{v}$ in a lattice $\mathcal{L}$ determines the first minimum of the lattice, denoted as $\lambda_1(\mathcal{L}) = \Vert \mathbf{v} \Vert$.
\end{definition}

\begin{definition} \textbf{Successive minima}
This is a generalization of $\lambda_1$.
Let $\mathcal{L}$ be a lattice of rank $B$. For $i\in \{1,\dots,B\}$, the $i$-th successive minimum $\lambda_i(\mathcal{L})$ is defined by the minimum of maximum norm of any $i$ linearly independent lattice points.
\end{definition}

The so-called LLL and BKZ algorithms are important foundational tools in lattice cryptography, playing a crucial role in designing HSSP attacks.
 Below is a brief overview of each algorithm:
\begin{itemize}
\item \textbf{LLL}:  The LLL algorithm \cite{lenstra1982factoring}  is a polynomial-time lattice basis reduction technique. It takes a basis of a lattice $\mathcal{L}$ as input and outputs an LLL-reduced basis. An LLL-reduced basis is characterized by its relatively short and nearly orthogonal vectors. This reduction not only simplifies the structure of the lattice but also facilitates further cryptographic computations, such as finding shorter or closer vectors relative to the target lattice basis.
\item \textbf{BKZ}: 
The BKZ algorithm \cite{chen2011bkz}  is an enhanced lattice reduction method that extends the ideas of the LLL algorithm by segmenting the lattice basis into blocks. Adjusting the block-size parameter allows for a trade-off between computational efficiency and the accuracy of the reduction. A larger block size leads to a more reduced basis, which is crucial for achieving finer lattice structures.  More specifically, BKZ-2 can produce an LLL-reduced basis in polynomial time. However, with full block-size one can retrieve the shortest vector of the lattice in exponential time. 
\end{itemize}

\subsection{NS attack}\label{sec.NSattack}
The whole process of the NS attack includes two main steps:

\begin{itemize}
    \item \textbf{Step 1}: The first step involves an orthogonal lattice attack. Given that the known information in \eqref{eq.hssp} is $\v h$ (in Section \ref{subsec.mhssp} we will discuss how to extend to the multidimensional case for \eqref{eq.mhssp}), the orthogonal lattice attack begins by calculating the orthogonal lattice of $\v h$  modulus $Q$, i.e., $\mathcal{L}_{Q}^{\bot}(\mathbf h)$. Since $\v h$ is  a linear combination of the hidden weight vectors $\{\mathbf{a}_i\}_{i\in \{1,\dots,B\}}$, the orthogonal lattice $\mathcal{L}^{\bot}(\mathbf{A})$ is then contained within $\mathcal{L}_{Q}^{\bot}(\mathbf h)$.  
   Observing that $\mathbf{a}_i$'s are binary, it could be inferred that the first $M-B$ short vectors of the LLL basis of $\mathcal{L}_{Q}^{\bot}(\mathbf h)$ are likely to belong to $\mathcal{L}^{\bot}(\mathbf{A})$. After obtaining $\mathcal{L}^{\bot}(\mathbf{A})$, then compute its orthogonal  $(\mathcal{L}^{\bot}(\mathbf{A}))^{\bot}$ using the LLL algorithm again, as  $(\mathcal{L}^{\bot}(\mathbf{A}))^{\bot}$ contains the target lattice $\mathcal{L}(\mathbf{A})$. Note that the LLL algorithm is utilized twice: first to compute the LLL-reduced basis of $\mathcal{L}^{\bot}_Q(\v h)$, and second to compute the orthogonal lattice of $\mathcal{L}^{\bot}(\v A)$ \cite{nguyen2006merkle, chen2018computing}.
    \item \textbf{Step 2}: This step focuses on recovering the binary vectors $\mathbf{a}_i$' from a LLL-reduced basis of $(\mathcal{L}^{\bot}(\mathbf{A}))^{\bot}$, such that the hidden private data vector $\mathbf{x}$ can then be recovered. Notice that the short vectors found in Step 1 might not be sufficiently short,  the BKZ algorithm is then applied to discover shorter vectors.  Let $\{\v v_i\}_{i=1}^{B}$ denote the obtained short vectors in  $(\mathcal{L}^{\bot}(\mathbf{A}))^{\bot}$. 
    Given the fact that $\mathbf{a}_i$'s are binary, it could be proved with a high possibility that these $\v v_i$s are either the vectors $\{\mathbf{a}_i\}$'s themselves or their differences $\{\mathbf{a}_i-\mathbf{a}_j\}$'s. Assuming $\mathbf{a}_i$'s are the only binary vectors within $(\mathcal{L}^{\bot}(\mathbf{A}))^{\bot}$, 
     then select $B$ binary vectors from $\{\mathbf{v}_i\}\cup \{\mathbf{v}_i-\mathbf{v}_j\}\cup \{\mathbf{v}_i+\mathbf{v}_j\}$ as $\mathbf{a}_i$'s. After obtaining $\v A$, select a $B\times B$ sub-matrix $\mathbf{A}'$ from $\mathbf{A}$ with non-zero determinant modulus $Q$. Subsequently, $\mathbf{A}'\cdot \v x\equiv \mathbf{h}' \mod Q$ allows the recovery of the hidden private data as $\v x\equiv \mathbf{A}'^{-1}\mathbf{h}' \mod Q.$
\end{itemize}

The above NS attack is guaranteed to work with a good probability if the parameters $M,B,Q$ satisfy certain conditions. The underlying idea is that the short vectors of the LLL basis of $\mathcal{L}_{Q}^{\bot}(\mathbf h)$ should form a basis of $\mathcal{L}^{\bot}(\mathbf{A})$. Consider a vector $\mathbf{y}$ that is orthogonal modulo $Q$ to $\mathbf{h}$. We thus have
\begin{align}\label{eq.yh}
    \langle \mathbf{y},\mathbf{h}\rangle=x_1\langle \mathbf{y},\mathbf{a}_1\rangle +\cdots+x_B\langle \mathbf{y},\mathbf{a}_B\rangle \equiv 0 \mod Q,
\end{align}
which implies that the vector $$\mathbf{p}_{\mathbf{y}}=(\langle \mathbf{y},\mathbf{a}_1\rangle,\dots,\langle\mathbf{y},\mathbf{a}_B\rangle)$$ is orthogonal to the hidden private data vector $\v x=(x_1,\dots,x_B)$, i.e., $\mathbf{p}_{\mathbf{y}}\in \mathcal{L}_{Q}^{\bot}(\v x)$.
It follows that if the norm of   $\mathbf{p}_{\mathbf{y}}$ is less than the first minimum of $\mathcal{L}_{Q}^{\bot}(\v x)$, $\mathbf{p}_{\mathbf{y}}$ must be a zero vector, thereby confirming that $\mathbf{y} \in \mathcal{L}^{\bot}(\mathbf{A})$. Given that $\dim (\mathcal{L}^{\bot}(\mathbf{A})) = M-B$, an upper bound for the $M-B$ successive minima of $\mathcal{L}^{\bot}(\mathbf{A})$ should be less than a lower bound estimation for the first minimum of $\mathcal{L}_{Q}^{\bot}(\mathbf{x})$. 
To guarantee $\mathbf{y} \in \mathcal{L}^{\bot}(\mathbf{A})$, the following inequality is required: 
\begin{align}
\log Q>\iota MB+\frac{MB}{2(M-B)}\log M+\frac{B}{2}\log(M-B), \nonumber
\end{align}
where $0<\iota<1$ is decided by the so-called  LLL Hermite factor which controls the quality of the LLL-reduced basis. For further details, we refer the reader to \cite{gini2022hardness,coron2020polynomial}.


\subsection{CG attacks}
The BKZ algorithm, used in Step 2 of the NS attack algorithm to identify shorter vectors, contributes to the non-polynomial time complexity of the NS attack. To address this inefficiency, Coron and Gini \cite{coron2020polynomial,coron2021provably} have proposed two alternative approaches for Step 2. These methods can efficiently recover the binary vectors $\mathbf{a}_i$'s within a polynomial time, significantly improving the practicality of the algorithm. Below we will briefly introduce the main ideas of these two approaches.

 \hspace*{\fill} 
 
\noindent\textbf{The multivariate approach \cite{coron2020polynomial}:}
Instead of using the BKZ algorithm to recover the shorter vectors, the multivariate approach proposes to recover the hidden vectors $\mathbf{a}_i$'s using a multivariate quadratic polynomial system.  Let $\{\mathbf{u}_{j}\}_{j\in \{1,\dots,B\}}$ denote a basis of $(\mathcal{L}^{\bot}(\mathbf{A}))^{\bot}$, obtained as the output from Step 1 of the NS algorithm.  Stacking them together and denote as matrix $\mathbf{U}=[\mathbf{u}_1;\dots;\mathbf{u}_B]\in \mathbb{Z}^{M\times B}$, further define a matrix $\v W \in \mathbb{Z}^{B\times B}$ such that the following holds: 
    \begin{align}\label{eq.UWA}
        \v U \v W=\v A.
    \end{align}
    The goal is then to recover the unknown $\mathbf{W}$ and $\mathbf{A}$ based on the knowledge of $\mathbf{U}$.
     Given that $\v A$ is binary, the equation simplifies to: 
    \begin{align}
        \v U[i; :] \v W[: ;j]=A[i,j]\in \{0,1\}. \nonumber
    \end{align}
    As $0,1$ are roots of the quadratic equation $x^2-x=0$, , this relationship can be expressed as:
    \begin{align}
       \langle \v U[i; :],\v W[: ;j] \rangle^2-\langle  \v U[i; :],\v W[: ;j] \rangle=0,\nonumber
    \end{align}
    where $i=\{1,\ldots,M\}, j=\{1,\ldots,B\}$. 
Hence,  the above $MB$ quadratic equations form a multivariate quadratic system for unknown $\v W\in \mathbb{Z}^{B\times B}$. Solving the quadratic system allows for the recovery of $\v W$, and consequently, $\v A$.

It is important to note that the above multivariate quadratic polynomial system requires that $M\approx B^2$ instead of $M=2B$ in the NS algorithm. With a bigger $M$, the complexity of Step 1 would be larger (see time complexity analysis in Section \ref{subsec.time}). To avoid the high complexity, the authors proposed a blocks orthogonal lattice attack to reduce the complexity of Step 1 to  $\mathcal{O}(B^9)$.
\\ \hspace*{\fill} 

\noindent\textbf{The statistical approach \cite{coron2021provably}:} Recall \eqref{eq.UWA}, since the ranks of $\mathbf{U}$ and $\mathbf{A}$ are both $B$, thus $\mathbf{W}$ is invertible over $\mathbb{Q}$. Denote its inverse as matrix $\mathbf{V}=\mathbf{W}^{-1}$. We then have 
\begin{align}
    \mathbf{U}=\mathbf{A}\cdot \mathbf{V}\nonumber.
\end{align} 
We remark that the above has a similar form of the so-called continuous parallelepiped, which is defined as 
\begin{align}
 \mathcal{P}_{[-1,1]}(\mathbf{V})=\{\sum_{i=1}^{B}c_i\v V[i; :] | c_i\in [-1,1]\} \nonumber,
\end{align}
where $\mathbf{V}\in \text{GL}_B(\mathbb{R})$. Given the fact that $\v A$ is binary,  the product $\mathbf{A} \cdot \v V[i; :]$ falls within a discrete parallelepiped with binary coefficients:
\begin{align}
 \mathcal{P}_{\{0,1\}}(\mathbf{V})=\{\sum_{i=1}^{B}c_i\v V[i; :] | c_i\in \{0,1\}\} \nonumber.
\end{align}
Consequently, we have 
\begin{align}
    \{ \v U[i; :] \}_{i=\{1,\ldots,M\}}\subset \mathcal{P}_{\{0,1\}}(\mathbf{V})\subset \mathcal{P}_{[-1,1]}(\mathbf{V})\nonumber.
\end{align}
The Nguyen-Regev algorithm \cite{nguyen2006learning} is designed to return, with a significant probability, a good approximation of a row of $\mathbf{V}$ when given $poly(n)$ vectors uniformly sampled from $\mathcal{P}_{[-1,1]}(\mathbf{V})$. Coron and Gini first apply this Nguyen-Regev algorithm and subsequently make some modifications to ensure the recovery of the target binary matrix $\v A$.

\begin{table}[t!]
\centering
\begin{tabular}{@{}lccc@{}}
  \toprule
  \(M\) & \(\log Q\) & First LLL & Second LLL \\
  \midrule
  \(\gg B\) & \(\mathcal{O}(BM)\) & \(\mathcal{O}(M^7 \cdot B^2)\) & \(\mathcal{O}(M^9/B^2)\) \\
  \(B^2\) & \(\mathcal{O}(B^3)\) & \(\mathcal{O}(B^{16})\) & \(\mathcal{O}(B^{16})\) \\
  \(2B\) & \(\mathcal{O}(B^2)\) & \(\mathcal{O}(B^9)\) & \(\mathcal{O}(B^7)\) \\
  \(B+1\) & \(\mathcal{O}(B^2 \log B)\) & \(\mathcal{O}(B^9 \log^2 B)\) & \(\mathcal{O}(B^7 \log^2 B)\) \\
  \bottomrule
\end{tabular}
\vspace{2mm}
\caption{Modulus size and time complexity of the two LLL algorithms of Step 1.}
\label{tab.com}
\vskip -8pt
\end{table}

\subsection{Complexity analysis}\label{subsec.time}
Overall,  the complete time complexity is approximately $\mathcal{O}(B^9)$.

\section{$\v m$HSSP attacks for FL and defenses} \label{sec.mhssp}
Given that the input samples of machine learning are often of high dimension, we will explain how to adapt the previously introduced HSSP attacks to handle multidimensional inputs, thereby developing mHSSP attacks.
Subsequently, we explore a popular defense mechanism—secure data aggregation and analyze its impact on time complexity.
\subsection{From HSSP to mHSSP}\label{subsec.mhssp}
The input of mHSSP problem is a sample matrix  $\v{H}$ of dimension $M\times u$. Analogous to the scenario in \eqref{eq.yh} from Section \ref{sec.NSattack} where $u=1$, for appropriate choices of the parameters $u,M,B,Q$,  the short vectors of the LLL basis of $\mathcal{L}_{Q}^{\bot}(\mathbf H)$ should form a basis of $\mathcal{L}^{\bot}(\mathbf{A})$. Consider the vector $\mathbf{y}$ that is orthogonal modulo $Q$ to $\mathbf{H}$:
\begin{align*}
    \mathbf{y}\mathbf{H}\equiv \mathbf{yAX}\equiv \mathbf{p}_{\mathbf{y}}\mathbf{X} \equiv \mathbf{0} \mod Q,
\end{align*}
which implies that $\mathbf{p}_{\mathbf{y}}\in \mathcal{L}_{Q}^{\bot}(\v X)$. Therefore, if the norm of  $\mathbf{p}_{\mathbf{y}}$ is less than the first minimum of $\mathcal{L}_{Q}^{\bot}(\v X)$, $\mathbf{p}_{\mathbf{y}}$ must be a zero vector. 
This insight enables us to extend the reasoning used for $u=1$. Specifically, we can still compute a basis of $(\mathcal{L}^{\bot}(\v A))^\bot$ as in Step 1 of the NS attack, except that here, we replace $\mathcal{L}_Q^{\bot}(\mathbf{h})$ with $\mathcal{L}_Q^{\bot}(\mathbf{H})$.

\subsection{Constructing $\mathcal{L}_Q^{\bot}(\mathbf{H})$ in mHSSP attacks}
We now explain how to modify the above attacks for generating $\mathcal{L}_Q^{\bot}(\mathbf{H})$. 
 One effective modification involves transitioning from using $\mathcal{L}_Q^{\bot}(\mathbf{h})$ to $\mathcal{L}_Q^{\bot}(\mathbf{H})$. Traditionally, the orthogonal lattice $\mathcal{L}_Q^{\bot}(\mathbf{h})$ is constructed in the following manner:

Recall $h_1$ denotes the first entry of vector $\v h$, for simplicity we assume the greatest common divisor $\gcd(h_1, Q)=1$ (see Appendix \ref{app.lattice} for a more generalized discussion).
In addition, $\langle (Q,0,\dots,0), \mathbf{h}\rangle\equiv 0 \mod Q.$ Define $\v h'=(h_2,...,h_M)$ and a basis of $\mathcal{L}_Q^{\bot}(\mathbf{h})$ can be constructed as 
$$\begin{pmatrix} Q & \mathbf{0}\\ -\mathbf{h}'\cdot (h_1^{-1} \mod Q) & \mathbf{I}_{M-1} \end{pmatrix}^T,$$ 
as it can be verified that each column vector is orthogonal to $\v h$ modulo $Q$.  

As in practice, the dimension of the hidden private data $u$ can be very large, for example $u=3072$ using CIFAR-10 \cite{cifar} as the training dataset. In the attacks, we can make use of these  $u$ columns for generating the orthogonal lattice.  For example, we can take a few columns, say $r\leq u$ , to construct $\mathcal{L}_Q^{\bot}(\mathbf{H})$, i.e. the column dimension of $\mathbf{H}$ is $r$. For simplicity,  assume $\mathbf{H}[1,\dots,r; :]\in \mathbb{Z}^{r\times r}$ is invertible modulus $Q$ and denote the inverse as $\v H_{r}^{-1}$.
Hence, we then construct a basis of $\mathcal{L}_Q^{\bot}(\mathbf{H})$ with the following:
$$\begin{pmatrix} Q \mathbf{I}_r & \mathbf{0}\\ -\mathbf{H}[r+1,\dots,M; :]\cdot \v H_{r}^{-1} & \mathbf{I}_{M-r}\end{pmatrix}^T.$$



\subsection{Applying secure data aggregation before sharing gradients} \label{subsec.sa}
There are various ways to enhance the privacy of FL, one of the most widely adopted methods is to apply secure data aggregation techniques to securely compute the average of the local gradients without revealing them \cite{xu2019verifynet}. These techniques allow for the secure computation of the average of local gradients without disclosing the individual gradients \cite{xu2019verifynet}. Notable techniques include secure multiparty computation \cite{Cramer2015}, which enables multiple parties to jointly compute a function's output from their private inputs without revealing the inputs themselves, and homomorphic encryption \cite{paillier1999public,damgaard2012multiparty,gentry2009fully}, which ensures privacy by enabling computations to be performed directly on encrypted data.

After applying secure data aggregation, the available information to the server is thus the global average of the local gradients of all clients. Assuming there are $N$ clients, let $\v G_{w,i}$, $\v D_i$ and $\v X_i$  denotes the local gradient, the hidden weight matrix, and input training data of client $i$ where $i = \{1,\ldots, N\}$, respectively. The global average of the local gradients of \eqref{eq.Gw}, denoted as  $\v G_{w,\mathrm{ave}}$, can then be represented as follows:
\begin{align}
\v G_{w,\mathrm{ave}}&=\frac{1}{N}\sum_{i=1}^{N} \v G_{w,i} \nonumber\\
&=\frac{1}{NB}\begin{pmatrix}
\v D_1;&\v D_2;&\ldots;&\v D_N
 \end{pmatrix}
\begin{pmatrix}
\v X_{1}\\
\v X_{2}\\
\vdots\\
\v X_{N}\\
\end{pmatrix} \nonumber.
\end{align}
That is, applying secure data aggregation implies that the hidden inputs row dimension increases by $N$ times. Hence, the attack complexity directly increases from $\mathcal{O}(B^9)$ to  $\mathcal{O}(N^9B^9)$.  In Fig. \ref{fig.timeHssp} of the following section, we validate this result by showing how the time complexity of HSSP attacks increases with the underlying batch size.

\begin{figure*}[t!]
  \begin{subfigure}[t]{1.0\linewidth}
    \centering\includegraphics[width=130mm]{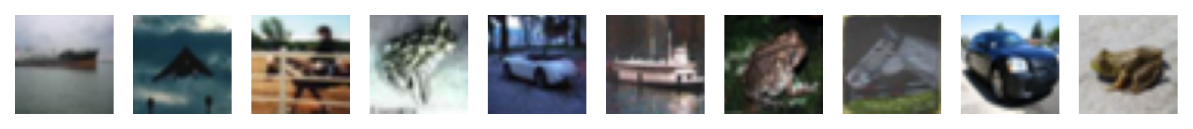}
    \centering\includegraphics[width=130mm]{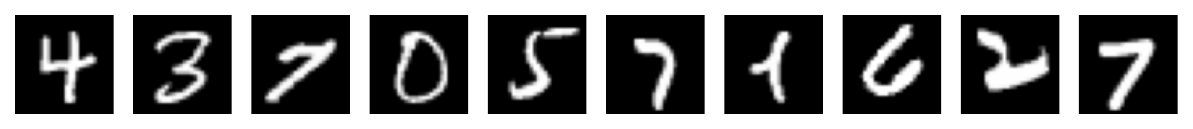}
    \centering\includegraphics[width=130mm]{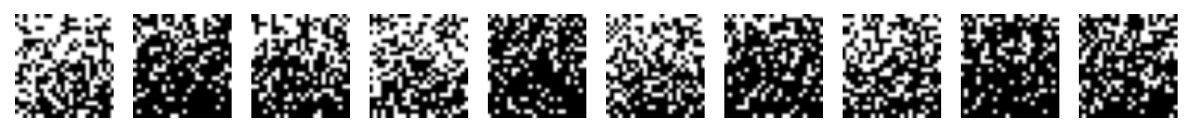}
    \captionsetup{justification=centering}
    \caption{Ground truth: $10$ training samples randomly selected from three datasets \\ including CIFAR-10 (top panel), MNIST (middle panel) and Purchase (Bottom panel\footnote{Note that the Purchase is a tabular dataset with each sample having 600 binary features, we resize it as $25\times 24$ for illustration purpose.}).}
  \end{subfigure}
    \begin{subfigure}[t]{1.0\linewidth}
    \centering\includegraphics[width=130mm]{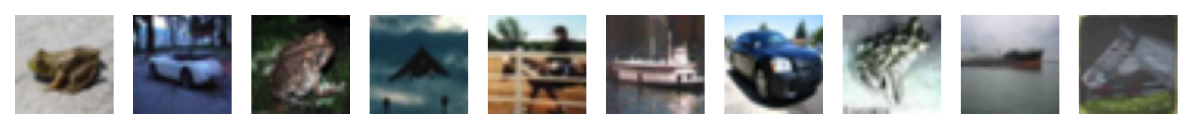}
    \centering\includegraphics[width=130mm]{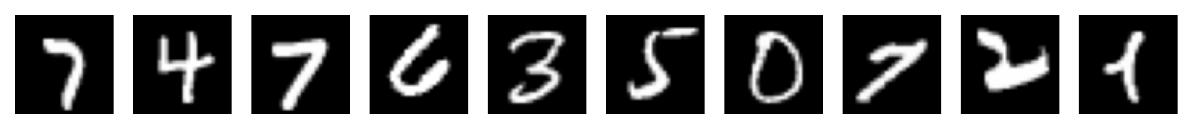}
    \centering\includegraphics[width=130mm]{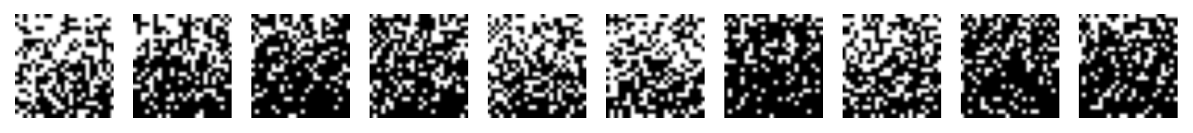}
    \caption{Reconstructed samples using the NS attack for each dataset.}
  \end{subfigure}
    \begin{subfigure}[t]{1.0\linewidth}
    \centering\includegraphics[width=130mm]{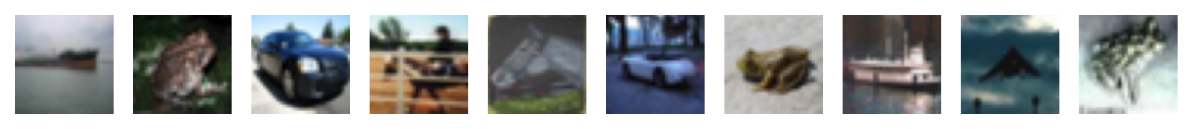}
    \centering\includegraphics[width=130mm]{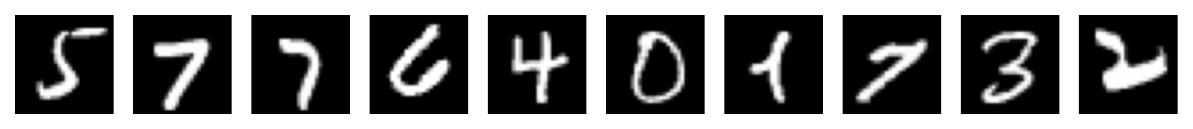}
    \centering\includegraphics[width=130mm]{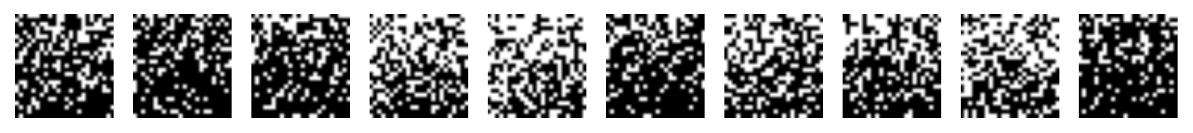}
    \caption{Reconstructed samples using the multivariate attack for each dataset.}
  \end{subfigure}
    \begin{subfigure}[t]{1.0\linewidth}
    \centering\includegraphics[width=130mm]{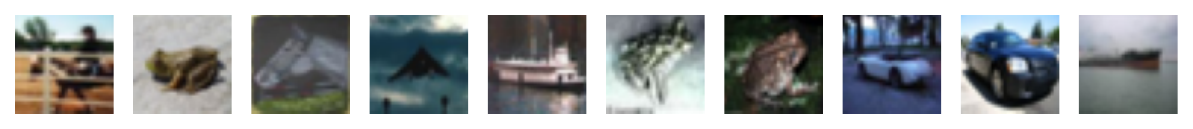}
    \centering\includegraphics[width=130mm]{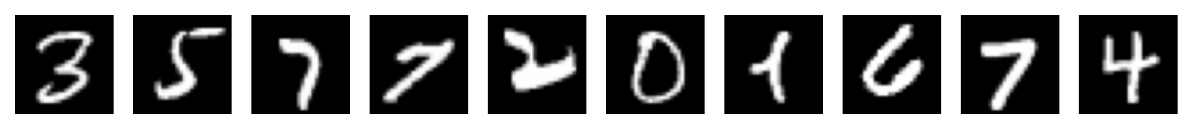}
    \centering\includegraphics[width=130mm]{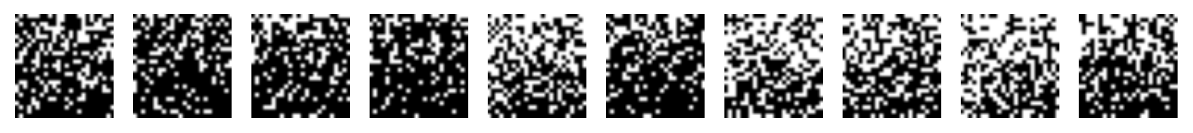}
    \caption{Reconstructed samples using the statistical attack for each dataset.}
  \end{subfigure}
\caption{Visualization examples of input reconstruction results using three mHSSP attacks for three datasets, respectively.}
\label{fig.fl1}
\end{figure*}

\section{Numerical results and analysis}\label{sec.numRes}
We now proceed to present the numerical results of mHSSP attacks in FL and discuss their implications. 
\subsection{Experimental setup}
We now evaluate the input reconstruction performance of mHSSP attacks using various datasets: two image datasets, MNIST~\cite{deng2012mnist}, CIFAR-10~\cite{cifar}, and one non-image dataset Purchase~\cite{shokri2017membership} (extended experiments using CIFAR-100 and Imagenet can be found in Fig. \ref{fig.imgnet} of Appendix \ref{app.con}). The MNIST dataset comprises 60,000 hand-written digital images of 28 $\times$ 28 pixels each, while CIFAR-10 includes 50,000 natural images of 32 $\times$ 32 pixels, both distributed across 10 classes. The Purchase dataset, derived from the ``Acquire Valued Shoppers'' challenge on Kaggle, contains shopping histories transformed into 197,324 samples with 600 binary features each~\cite{shokri2017membership}.

In our experiments, we simulate a federated setting with 10 clients. Each training dataset is randomly split into 10 folds and each client holds one fold. The model architecture is primarily based on MLPs to align with our theoretical analysis. We demonstrate the generality of our results across different MLP configurations by employing two specific MLP designs: for training CIFAR-10 and MNIST we use an MLP with 3 layers and (500, 100, 10) neurons in each layer respectively; for the Purchase dataset, we employ MLPs with four layers (1024, 512, 256, and 100 neurons in each layer).  
The learning rate for all experiments is set at 0.03 unless specified otherwise. The test accuracies of each model for CIFAR-10, MNIST, and Purchase are around $54\%, 98\%$, and $90\%$, respectively. It is important to note that while MLPs generally perform not well for complex natural image datasets like CIFAR-10, they are used here only for illustrative purposes.

\subsection{Visualization of mHSSP attacks' results}
To demonstrate the attack results, we randomly subsample the gradient information $\v G_w$ to the dimension of $300\times 20$ as the input (scaled up by $B$) of mHHSP attacks. In Fig. \ref{fig.fl1} we show the reconstructed samples using the three described mHSSP attacks under the case of $B=10$ for three datasets: CIFAR-10, MNIST, and Purchase, respectively (see Fig. \ref{fig.imgnet} in Appendix \ref{app.con} for results of using CIFAR-100 and Imagenet datasets). The top panel (a) shows 10 ground truth samples randomly sampled from each dataset. The subsequent plots demonstrate the reconstructed samples using each attack: (b) the NS attack, (c) the multivariate attack, and (d) the statistical attack.   
As we can see, all mHSSP attacks can successfully recover the original input private samples, except the fact that the order is permuted. Note that in this case, the 
reconstruction is perfect, i.e., the Euclidean distance between the reconstructed samples and the original samples is zero.

\subsection{Perfect input reconstruction} To evaluate the quality of reconstructed inputs, several widely used metrics include Mean Squared Error (MSE), Peak Signal-to-Noise Ratio (PSNR), and the Structural Similarity Index Measure (SSIM), with the latter two are mainly used for image datasets.  For instance, SSIM, which ranges from -1 to 1, assesses the similarity between reconstructed images and their original counterparts. A score of $1$ denotes perfect resemblance, while a score of $0$ indicates no correlation.   
However, these metrics are often considered limited and unreliable in accurately reflecting sample similarity, as highlighted in several studies \cite{almohammad2010stego,pambrun2015limitations}. Our approach alleviates the issue of choosing appropriate similarity metrics by achieving perfect input reconstruction, where the reconstructed inputs are identical to the original ones, thereby providing a definitive measure of reconstruction accuracy. 

To further validate this, 
in Tab. \ref{tab.mse} we show the overall input reconstruction performances of three mHSSP attacks across CIFAR-10, MNIST, and the Purchase datasets.  In each experiment, we set the batch size as $10$ and randomly select $500$ samples for reconstruction. As demonstrated,  both SSIM$=1$  and MSE$=0$ (PSNR is not displayed as it becomes infinite if  MSE is zero), confirm that the reconstructed inputs are exactly the same as the original training samples.

\subsection{No restrictive assumption for samples with repeated labels} As mentioned before, traditional gradient inversion attacks typically require that samples within a mini-batch should not have repeated labels, as this can lead to similar reconstructed images for the same label \cite{yin2021see,geiping2020inverting}, otherwise the reconstructed images for the same label would be quite similar. In contrast, our approach does not necessitate this restrictive assumption. To demonstrate this, we conducted experiments with 40 randomly selected samples all labeled as ``cat'' from the CIFAR-10 dataset, and tested the input reconstruction results using the three mHSSP attacks. The results are demonstrated in Fig. \ref{fig.fl2}, we can see that though all images are from the same class, all the input samples can be reconstructed perfectly except that their order is permuted. Hence, this confirms the benefits of using HSSP to analyze the privacy leakage in FL.

\begin{table}[t!]
\centering
\begin{tabular}{@{}lccc@{}}
  \toprule
   & \multicolumn{3}{c}{Dataset} \\
   \cmidrule(l){2-4} 
   & CIFAR-10 & MNIST & Purchase \\
  \midrule
NS attack & \multicolumn{2}{c}{MSE=0$\&$ SSIM=1} & MSE=0 \\
Multivariate attack &  \multicolumn{2}{c}{MSE=0$\&$ SSIM=1} & MSE=0 \\
Statistical attack &  \multicolumn{2}{c}{MSE=0$\&$ SSIM=1}& MSE=0 \\
  \bottomrule
\end{tabular}
\vspace{2mm}
\caption{Perfect input reconstruction results:  MSE and SSIM of 500 reconstructed samples and their corresponding training samples randomly selected in each dataset, using three mHSSP attacks.}
\label{tab.mse}
\end{table}

\begin{figure}[t!]
  \begin{subfigure}[t]{1.0\linewidth}
    \centering\includegraphics[width=80mm]{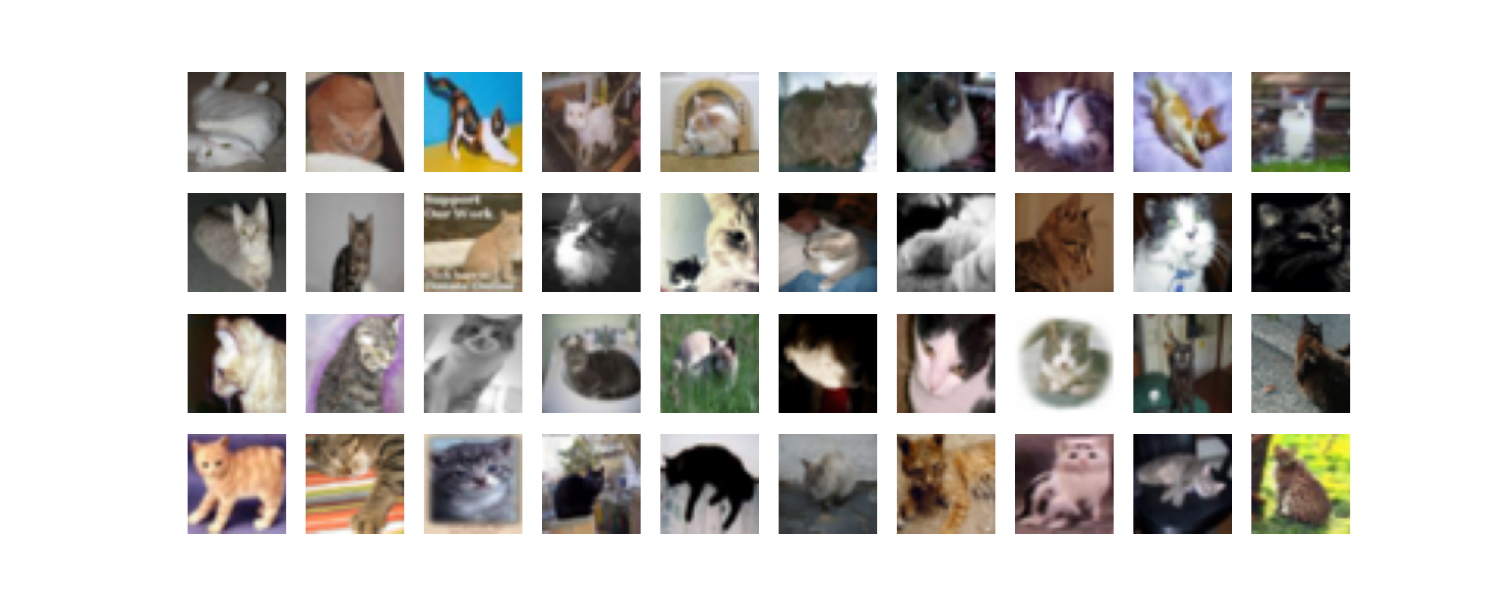}
    \centering\includegraphics[width=80mm]{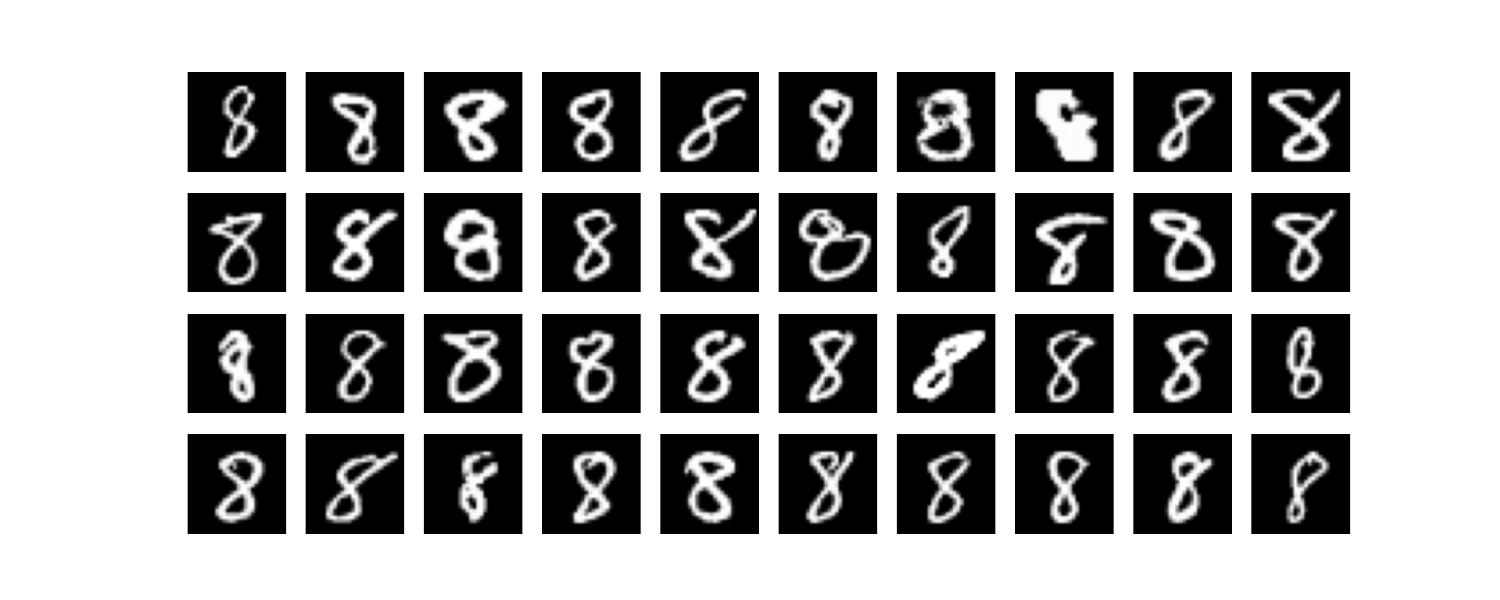}
    \caption{Ground truth: $B=40$ training samples randomly sampled from the same label: 'cat' in CIFAR-10 (top panel) and digit '8' in MNIST (bottom panel).}
  \end{subfigure}
    \begin{subfigure}[t]{1.0\linewidth}
    \centering\includegraphics[width=80mm]{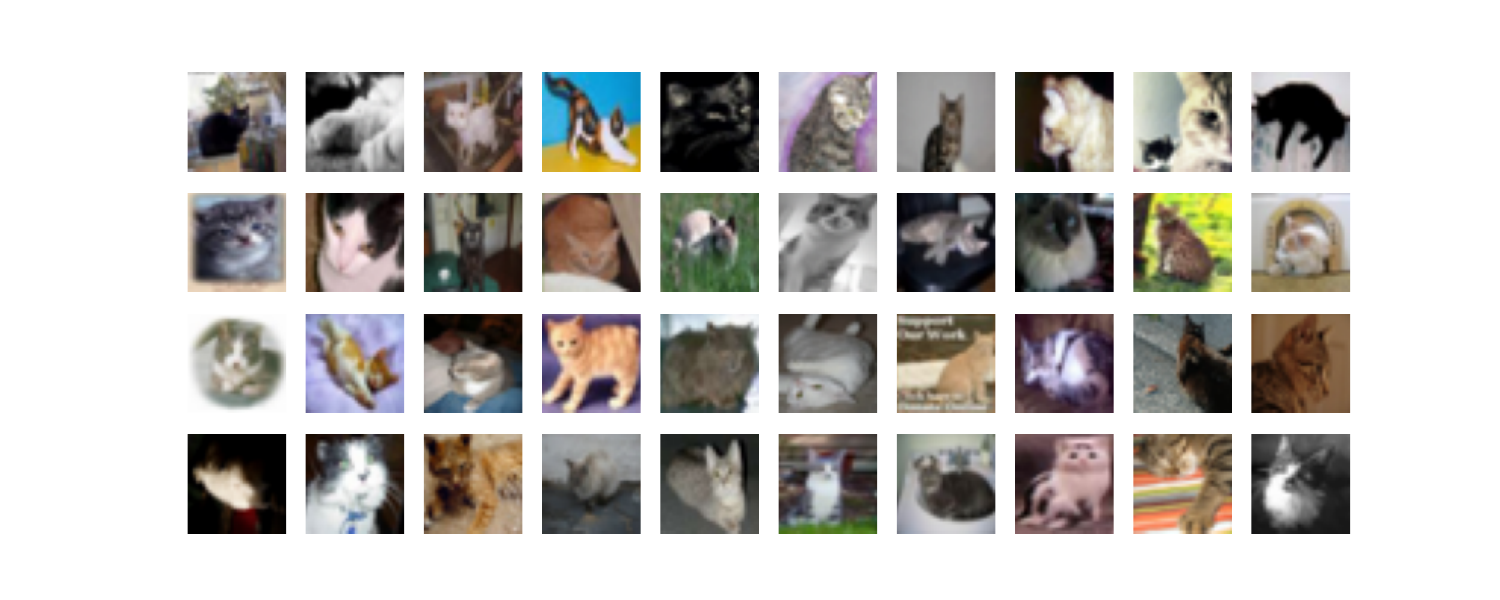}
    \centering\includegraphics[width=80mm]{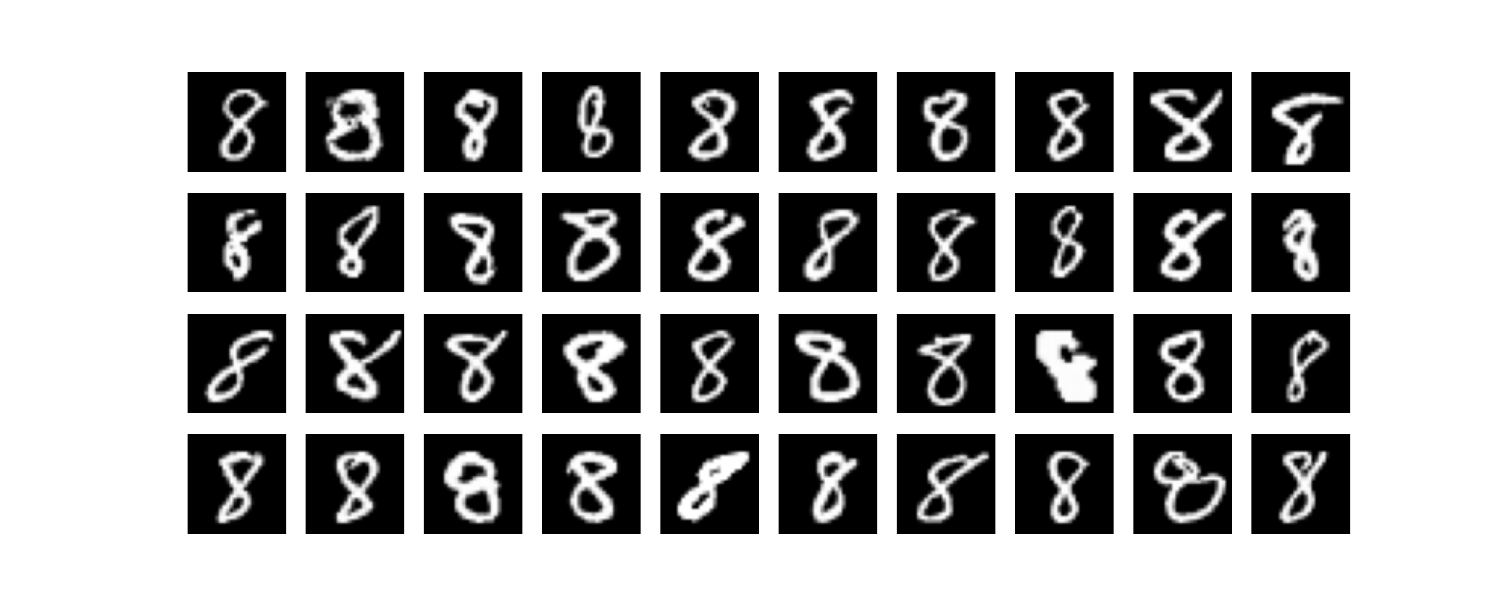}
    \caption{Reconstructed samples using the NS attack for each dataset.}
  \end{subfigure}
     \vskip -4pt
\caption{Input reconstruction results using mHSSP attacks for a batch of samples from the same class label. Due to space limits, results from the multivariate and statistical attacks are presented in Fig. \ref{fig.fl2c} in Appendix \ref{app.con}.}
\label{fig.fl2}
\vskip -4pt
\end{figure}


\begin{figure}
\begin{subfigure}[t]{1.0\linewidth}
    \centering\includegraphics[width=80mm]{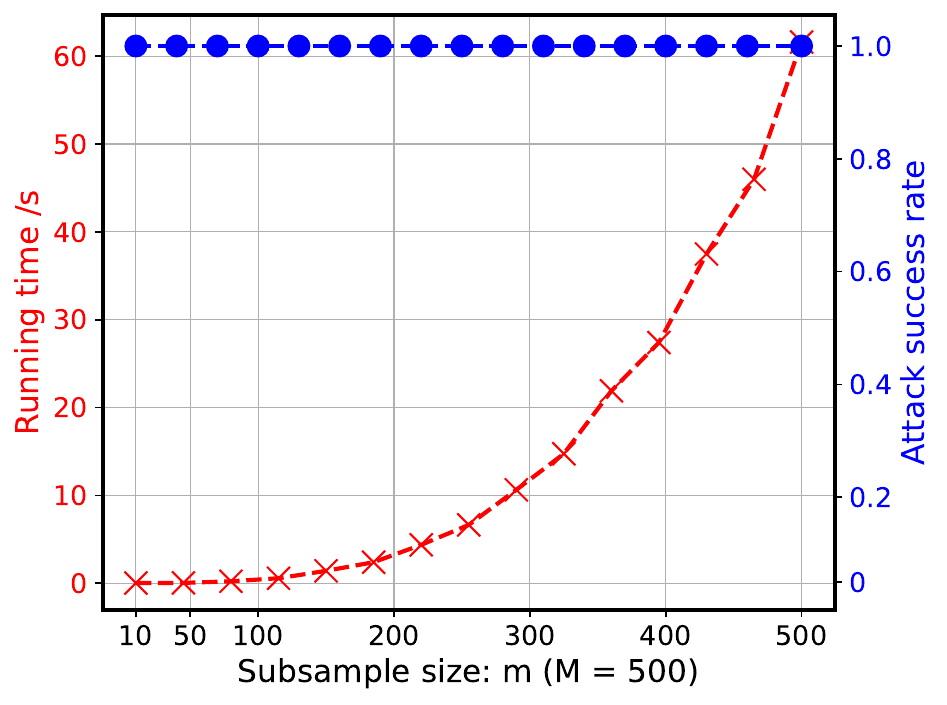}
    \vskip -4pt
    \caption{NS attack}
  \end{subfigure}
    \begin{subfigure}[t]{1.0\linewidth}
    \centering\includegraphics[width=80mm]{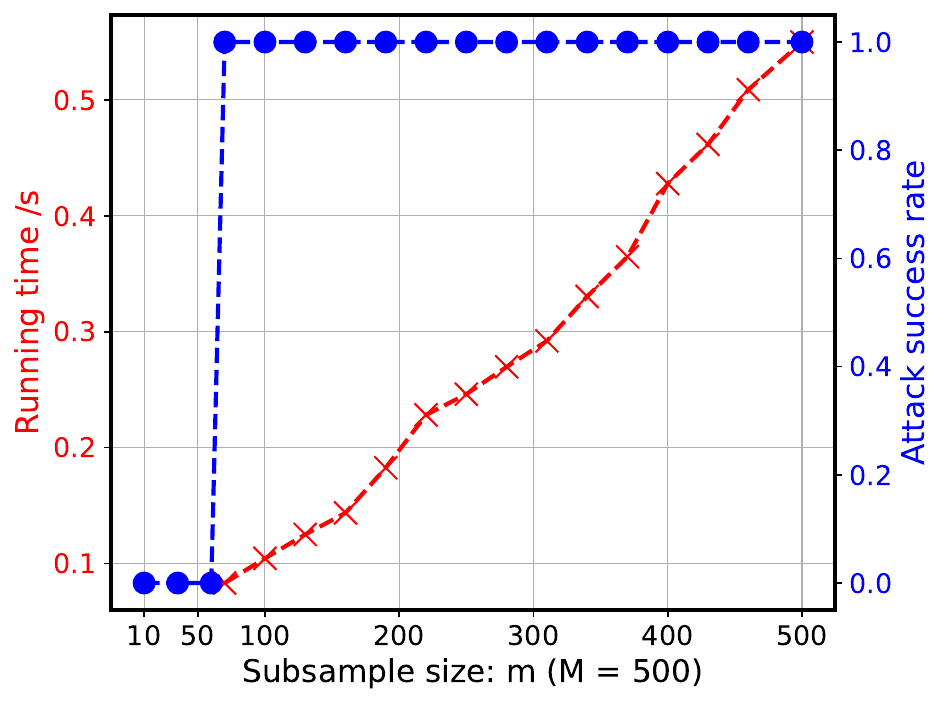}
    \vskip -4pt
    \caption{Multivariate attack}
  \end{subfigure}
\begin{subfigure}[t]{1.0\linewidth}
    \centering\includegraphics[width=80mm]{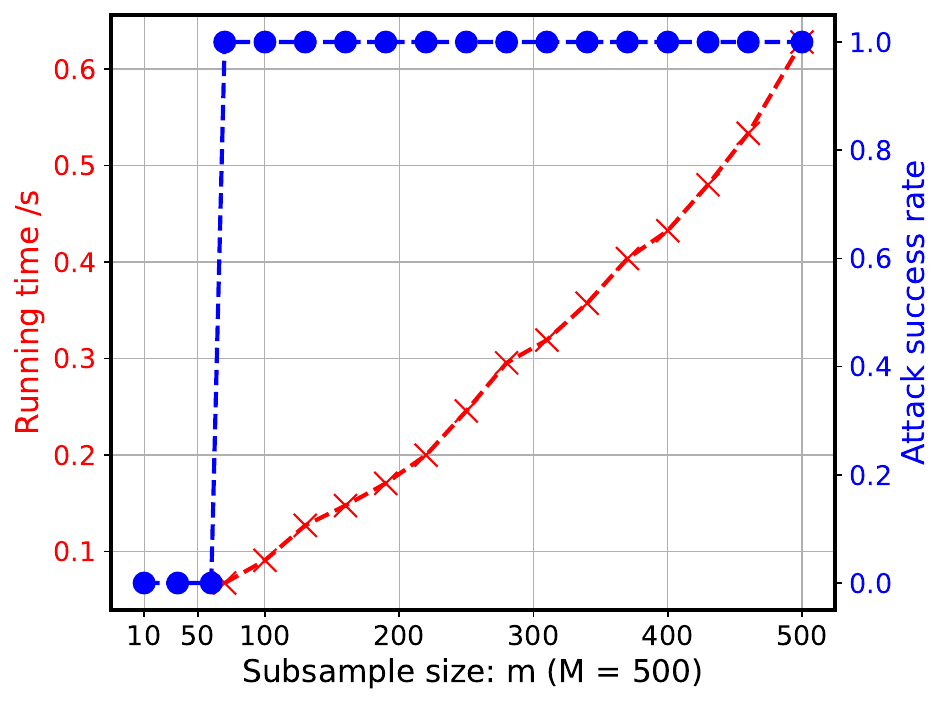}
    \vskip -4pt
    \caption{Statistical attack}
  \end{subfigure}
     \vskip -4pt
    \caption{ Running time (red lines) and attack success rate (blue lines) as a function of subsample size $m\leq M$ using three mHSSP attacks for CIFAR-10 dataset. }
    \label{fig.mHSSP}
    \vskip -4pt
\end{figure}

\begin{figure}
    \centering
    \includegraphics[width=.45\textwidth]{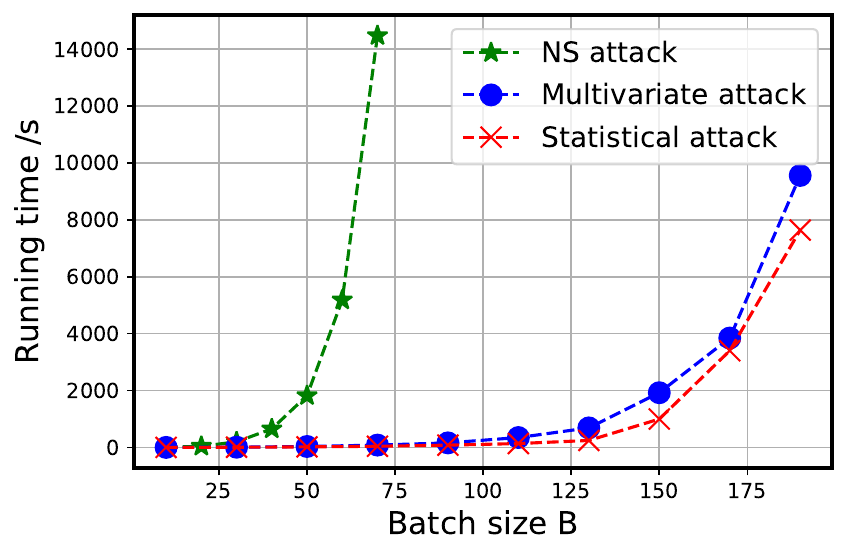}
    \caption{Running time as a function of the batch size $B$ using three mHSSP attacks. }
    \label{fig.timeHssp}
\end{figure}
\subsection{Validation of time complexity} \label{subsec.timeMB}
To substantiate the claims in Section \ref{subsec.time}, we now present how the time complexity changes regarding two important parameters: the batch size $B$ and the number of neurons $M$ in the first layer. \\ \hspace*{\fill} 

\noindent\textbf{Running time v.s., Number of neurons $M$:} 
As discussed in Section \ref{subsec.time}, the time complexity of mHSSP attacks increases significantly when \(M \gg B\). This challenge can be effectively managed by subsampling when 
 $M$ is excessively large, which can greatly improve the efficiency of the mHSSP attacks.  Define $m\leq M$ as the subsample size, i.e., we randomly subsample $m\leq M$ rows from the gradient matrix $\v G_w$ in \eqref{eq.Gw} and use it as input to the mHSSP attacks. A practical heuristic choice for $m$ is $m=2B$ for NS attack and $m\approx B^2$ for multivariate and statistical attacks \cite{coron2020polynomial,coron2021provably}. 

To examine this relationship, in Fig. \ref{fig.mHSSP} we demonstrate both the running time (red line) and attack success rate (blue line) as a function of subsample size $m\leq M$ given a fixed batch size $B$ using three mHSSP attacks for CIFAR dataset (see similar results for MNIST and Purchase datasets in Fig \ref{fig.mHSSP_mnist} and \ref{fig.mHSSP_purchase} of Appendix \ref{app.con}). 
For the sake of reconstruction efficiency, the batch size is set as 10, and for each dataset we trained 50 models to calculate the mean running time and attack success rate. For each model, a random batch was selected and subjected to mHSSP attacks using varying subsample sizes ranging from 10 to 500.  As expected, if $m=M$, the attacks are successful, but the running time is prohibitively high. In addition, using only a subset of the gradient matrix for attacks increases time efficiency without compromising the success rate. This efficiency is due to the requirement in Step 1 to compute the orthogonal lattice in dimension $m$, a larger $m$ demands more computational time. However, from the perspective of attack success rate, only a sufficient amount of information is necessary for effective reconstruction. Hence, we substantiate the conclusion that the number of neurons impacts the time complexity and the attack efficiency can be optimized by appropriate subsampling. 
  \\ \hspace*{\fill}

\noindent\textbf{Running time v.s., Batch size $B$:} 
After investigating the effect of neuron number $M$ in running time, we now examine the second parameter, namely the batch size $B$. 
In Figure \ref{fig.timeHssp} we demonstrate the running times associated with varying batch sizes $B$ for three mHSSP attacks using CIFAR-10. For each attack, we applied the above-described heuristic to optimize the subsample sizes, thereby improving attack efficiency.  The result confirms that, as anticipated, the more recent multivariate and statistical attacks are in general faster than the NS attack. Notably, the running time escalates rapidly as the batch size increases. This observation confirms the sensitivity of these attacks to changes in batch size, highlighting it as a critical parameter for reconstructing inputs from the gradients in FL. Hence, this also explains why existing empirical gradient inversion attacks have difficulty reconstructing inputs for big batch sizes.






\subsection{Limitations}
We remark that our work is only an initial effort to analyze the privacy implications of gradient sharing in FL using HSSP. Although our findings provide foundational insights, many interesting avenues for further research could enhance the breadth and depth of our understanding. A few of them are listed below:
\begin{enumerate}
\item \textbf{Expansion to other model architectures:} Currently, our analysis is limited to MLPs. It would be beneficial to extend this analysis to other architectures such as Convolutional Neural Networks (CNNs). Intuitively, reconstructing inputs from CNNs might be more challenging than from MLPs because CNNs tend to abstract and smooth input information, potentially complicating the inversion process.
\item  \textbf{Optimization of attacks using bias information:} The bias information, as represented in \eqref{eq.Gb}, while not directly derived from the input training data, contains insights about the weights of each row of the hidden weight matrix $\v D$. This aspect has not been explored in existing attacks, presenting an opportunity to develop more sophisticated attacks by leveraging this data.
\item \textbf{Exploration of time complexity related to data dimension parameter $u$:} In Section \ref{subsec.mhssp}, we discussed addressing the multidimensional nature of data when constructing $\mathcal{L}_Q^{\bot}(\mathbf{H})$. However, the dimension parameter $u$ has not been fully explored in the context of existing attacks, particularly concerning time complexity. This is a crucial factor for further investigation, especially given that $u$ is often large in machine learning applications.
\item  \textbf{Problem analysis in more complex settings:} As noted in Section \ref{subsec.conn}, our current analysis assumes a simplified scenario where the hidden weight matrix $\v D$ is binary, corresponding to the mHSSP. A more practical approach would consider non-binary weight matrices, thereby transitioning to a multidimensional Hidden Linear Combination Problem (mHLCP) where weights range from $[0,c]$. We hypothesize that the time complexity in such settings could significantly exceed that of mHSSP, particularly when the weight range $c$ is extensive. Further detailed investigation is required to confirm this hypothesis. 
\end{enumerate}

\section{Related work}\label{sec.prior}

\subsection{Privacy in FL}
Concerning privacy in machine learning models, there are mainly three primary types of privacy attacks:  1) Membership inference attacks \cite{shokri2017membership,salem2018ml,yeom2018privacy,song2021systematic} aim to determine whether a specific sample was used in the training dataset or not; 2) Property inference attack \cite{ateniese2015hacking,ganju2018property} seek to deduce properties of the input training data such as gender, class label; Input reconstruction attacks (e.g.,  model inversion attack \cite{fredrikson2014privacy,fredrikson2015model}) attempt to reconstruct the original training samples. All of these attacks have also been investigated in FL. It is shown that by exploiting the model updates shared to the server, an attacker can infer sensitive information such as 
the training data's properties \cite{xu2020subject,melis2019exploiting,wainakh2021user} and membership information \cite{melis2019exploiting}, or even allowing for the reconstruction of the input training samples \cite{zhu2019deep,zhao2020idlg,geiping2020inverting,yin2021see,boenisch2021curious,geng2023improved,fowl2021robbing,zhu2020r,wei2020framework,yang2022using,zhao2022deep,xu2022agic}. Our study concentrates on input reconstruction attacks due to their significant privacy implications and their prevalence in FL research. 

\subsection{Gradient inversion attack}
Gradient inversion attack is the most popular input reconstruction attack in FL as the shared information is often the updated gradients. \hspace*{\fill} 
 
\textbf{Optimization based approaches:} DLG \cite{zhu2019deep}  is the first proposed gradient inversion attack, which demonstrates that it is possible to reconstruct the training data with high fidelity using the shared gradient information. 
The main observation is that if two data points are similar to each other, their gradients might also be similar. DLG first initializes a random noise image called dummy data and produces a corresponding dummy gradient, it then iteratively optimizes the dummy data to resemble the training sample by minimizing the distance between the dummy gradient and the real gradient generated by the real training data. Despite its innovation, DLG struggled with convergence issues and was limited to low-resolution images and small batch sizes. 
To address these limitations, a rich line of work \cite{zhao2020idlg,geiping2020inverting,yin2021see,boenisch2021curious,geng2023improved,fowl2021robbing,zhu2020r,wei2020framework,yang2022using} try to propose more advanced gradient inversion attacks, e.g.,  increasing the optimization efficiency and producing accurate reconstructed data for high-resolution images \cite{zhao2020idlg,geiping2020inverting}. Yin et al. \cite{yin2021see} further improved the optimization efficiency by adding regularization and the attack can work with the batch size up to $48$. However, it requires that one batch cannot contain two images from the same class.  Yang et al. \cite{yang2022using} optimizes the attack efficiency and considers a more challenging case where the gradients are compressed.  

Given a reasonable batch size such as 128 the reconstructed data samples of existing work are often of low fidelity \cite{yin2021see,geng2023improved}.  Multiple reasons might cause such poor performances. 1) The optimization-based approaches suffer from the local minima problem and thus the reconstructed samples are often quite different from the original training data \cite{yin2021see}. 2) The fundamental assumption that minimizing the distance between gradients would make the synthesized data similar to the original training data is not necessarily true, as different mini-batch of data may produce almost identical gradients \cite{shumailov2021manipulating}.   In addition, a major challenge in input reconstruction attacks, such as gradient inversion and model inversion, involves accurately quantifying the similarity between two samples. For example, metrics like SSIM or PSNR are commonly used to evaluate the likeness between reconstructed and original inputs. However, these metrics are often considered limited and unreliable in accurately reflecting sample similarity, as highlighted in several studies \cite{almohammad2010stego,pambrun2015limitations}.  \\ \hspace*{\fill} 


\noindent\textbf{Analytical approaches} There are a few theoretical analyses on the privacy leakage caused by gradient sharing. Geiping et al. \cite{geiping2020inverting} theoretically prove that it is possible to perfectly reconstruct the training samples regardless of trained or untrained models. However, it only considers restrictive settings, e.g., only one sample is activated among a mini-batch.  Fowl et al \cite{fowl2021robbing} showed that it is possible to reconstruct the input by assuming a stronger threat model where the central server is active, i.e., it can change the model architecture.  Similarly, Boenisch et al. \cite{boenisch2021curious}  showed that perfect input reconstruction is possible by assuming that the central server can manipulate the weights maliciously. \\  \hspace*{\fill}  

\noindent\textbf{Distinctions from existing work} Our work differs from these existing approaches by adopting a cryptographic perspective, namely HSSP, to assess the difficulty of inverting input data from shared parameters (such as gradients and model weights) in FL. This approach not only overcomes empirical limitations, such as the need for specific class distribution in batches but also addresses issues related to the sensitivity of evaluation metrics, enabling perfect input reconstruction. Importantly, our analysis demonstrates that the difficulty of reconstructing inputs from gradients depends on the batch size, theoretically explaining the degradation in the performance of empirical attacks as batch sizes increase. Distinctively, our approach accommodates more general scenarios by not restricting to a single batch size, and rigorously derived the associated time complexities.  This novel cryptographic perspective offers a unique analytical angle that effectively addresses and mitigates the limitations observed in prior empirical and theoretical studies.

\section{Conclusion}\label{sec.conclu}
In this paper, we take the first step in theoretically analyzing input reconstruction attacks in Federated Learning (FL) from a cryptographic perspective through the lens of the Hidden Subset Sum Problem (HSSP). By mathematically formulating the gradients shared in an MLP network as a classical cryptographic problem called HSSP, we leveraged cryptographic tools to analyze it.   Our analyses demonstrated that the time complexity of reconstructing inputs in FL escalates with increasing batch sizes, a finding that elucidates the observed degradation in the performance of existing gradient inversion attacks as batch sizes expand.  Furthermore, we showed that adopting secure aggregation techniques would be a good defense against input reconstruction attacks as it directly increases the time complexity of existing HSSP attacks by $N^9$ times where $N$ is the number of participants in FL. We also discussed some interesting directions for further exploration. Our work not only sheds light on profound theoretical insights but also paves the way for future research at the intersection of cryptography and machine learning.   As an initial work employing cryptographic analysis to understand privacy vulnerabilities in FL, we call for more researchers to explore this promising cross-disciplinary topic.






%


\newpage
\bibliographystyle{IEEEtran}
\bibliography{main.bib}

\appendix

\section{A general way to construct a basis of $\mathcal{L}_Q^{\bot}(\mathbf{h})$ } \label{app.lattice}
Besides the case of $\gcd(h_1, Q)=1$  discussed before, in what follows we discuss how to construct the basis of $\mathcal{L}_Q^{\bot}(\mathbf{h})$ for other cases:

\noindent \textbf{The case of  $\mathbf{h} \equiv \mathbf{0} \mod Q$}: in this case the basis can be directly constructed as $\mathcal{L}_Q^{\bot}(\mathbf{h})=\mathbb{Z}^M$.

\noindent \textbf{The case of $\gcd(h_1, Q)=\gcd(h_1,h_2,\cdots,h_M,Q)=d$} where $d\neq 1$: the construction is the same as the case of $\gcd(h_1, Q)=1$ by simply replacing $Q$ and $\v h$ with $Q/d$ and $\v h /d$, respectively. Hence,  the basis of $\mathcal{L}_Q^{\bot}(\mathbf{h})$ can be constructed using the same basis of $\mathcal{L}_{Q/d}^{\bot}(\mathbf{h}/d)$.

\noindent \textbf{The case of $\gcd(h_1,h_2,\cdots,h_M,Q)=d$ while $\gcd(h_i,Q)=d_i\ne d,\ \forall i\in \{1,\dots,M\}$}:  define the first vector as  $\mathbf{y}_1:=(Q/d_1,0,\dots,0)$, the second one is constructed as  $\mathbf{y}_2:=(-(h_2/d_{12})\cdot ((h_1/d_1)^{-1} \mod Q/d_1),d_1/d_{12},0,\dots,0)$ where $d_{12}=\gcd(d_1,d_2)$. This construction ensures  $\langle \mathbf{y}_2,\mathbf{h}\rangle \equiv 0 \mod Q$ as 
\begin{align*}
\begin{split} \langle \mathbf{y}_2,\mathbf{h}\rangle &\equiv -(h_2/d_{12})\cdot ((h_1/d_1)^{-1}\mod Q/d_1)\cdot h_1\\
&\quad+d_1h_2/d_{12}\\
&\equiv -(h_2/d_{12})\cdot ((h_1/d_1)^{-1}\mod Q/d_1)\cdot(h_1/d_1)\cdot d_1\\
&\quad+d_1h_2/d_{12}\\
&\equiv -(h_2/d_{12})\cdot d_1+d_1h_2/d_{12}\\
&\equiv 0 \mod Q.
\end{split}
\end{align*}
The third vector can be constructed as $\mathbf{y}_3:=(-(h_3/d_{123})\cdot ((h_1/d_1)^{-1} \mod Q/d_1)\cdot k_1,-(h_3/d_{123})\cdot ((h_2/d_2)^{-1} \mod Q/d_2)\cdot k_2, d_{12}/d_{123},0,\dots,0)$ where $d_{123}=\gcd(d_1,d_2,d_3)$ and $k_1,k_2\in \mathbb{Z}$ satisfying $k_1d_1+k_2d_2=d_{12}$. It could be checked that $\langle \mathbf{y}_3,\mathbf{h}\rangle\equiv 0 \mod Q.$ Hence, the rest can be constructed similarly and collect them together as $[\v y_1,\v y_2,\ldots,\v y_M]$, which is a basis of $\mathcal{L}_Q^{\bot}(\mathbf{h})$.

\newpage
\section{Additional experimental results}\label{app.con}
\begin{figure}[H]
  \begin{subfigure}[t]{1.0\linewidth}
    \centering\includegraphics[width=80mm]{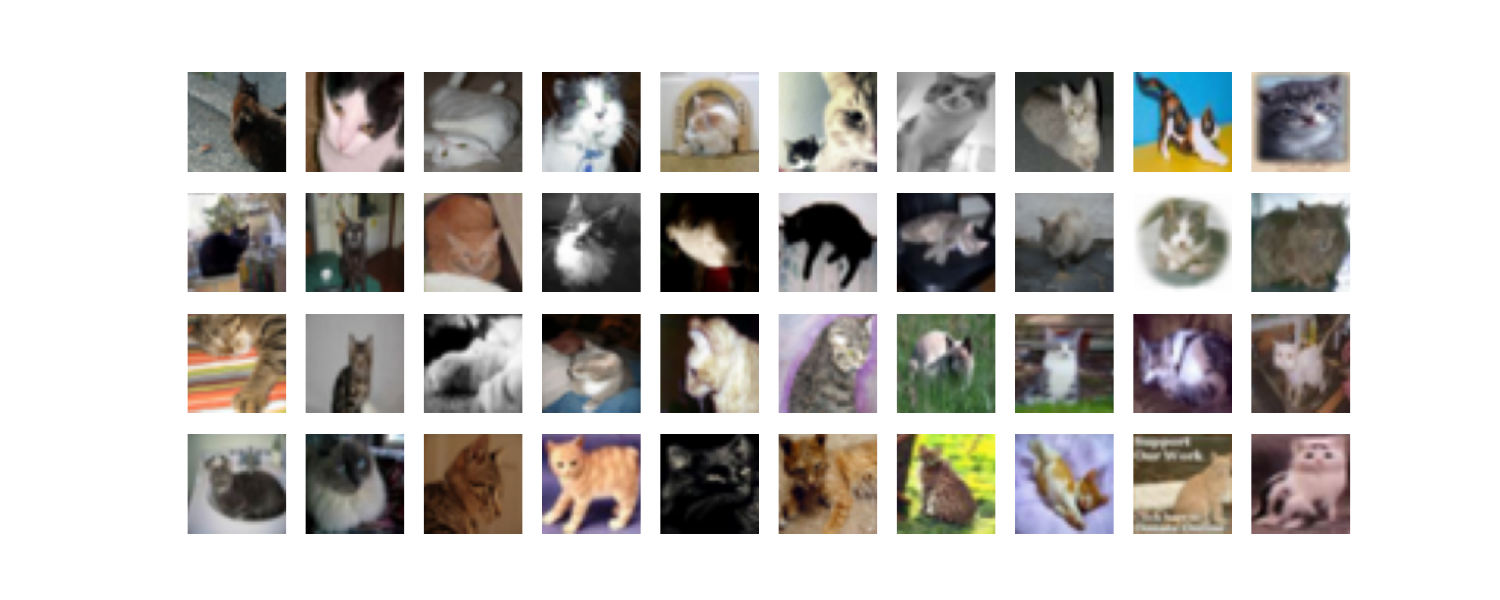}
    \centering\includegraphics[width=80mm]{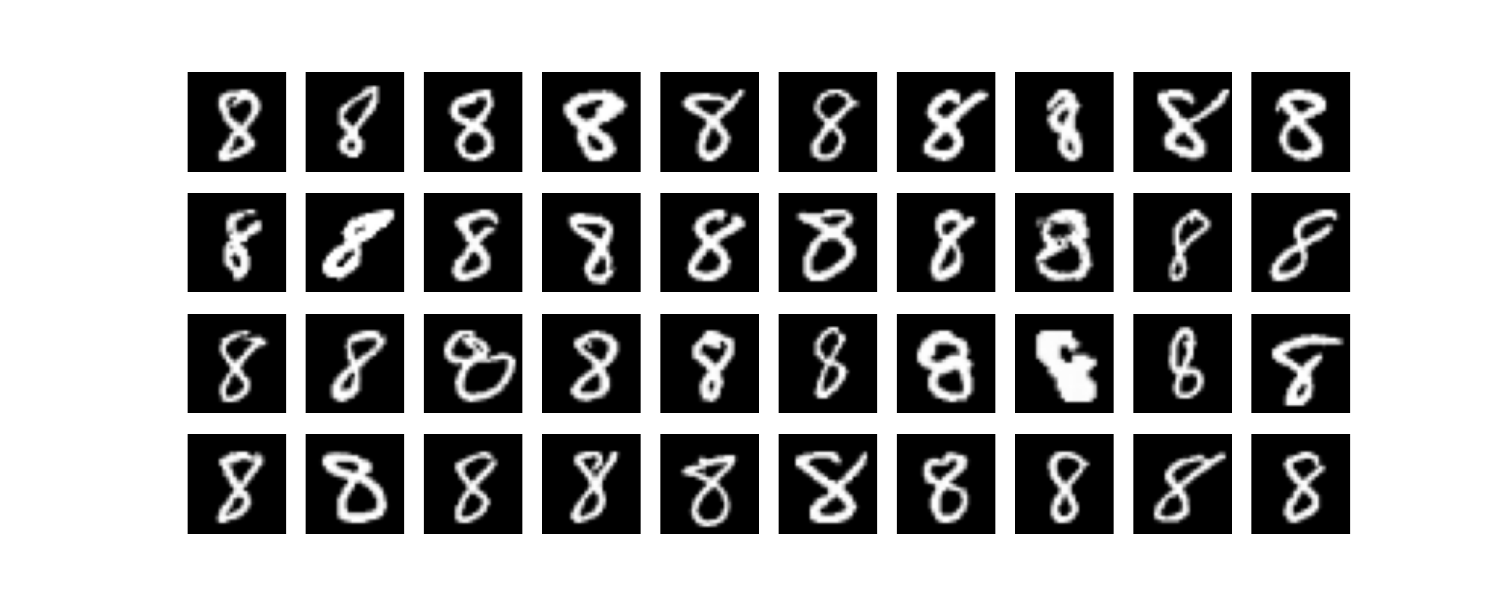}
    \caption{Reconstructed samples using the multivariate attack for each dataset.}
  \end{subfigure}
    \begin{subfigure}[t]{1.0\linewidth}
    \centering\includegraphics[width=80mm]{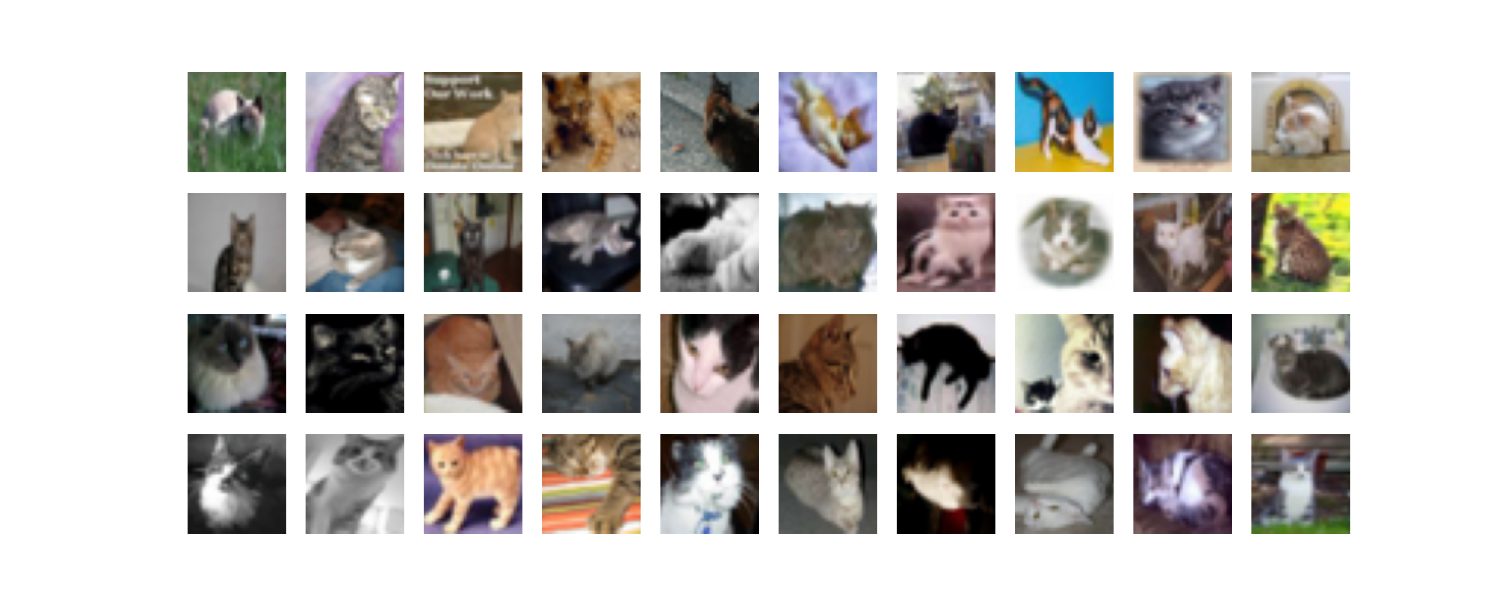}
    \centering\includegraphics[width=80mm]{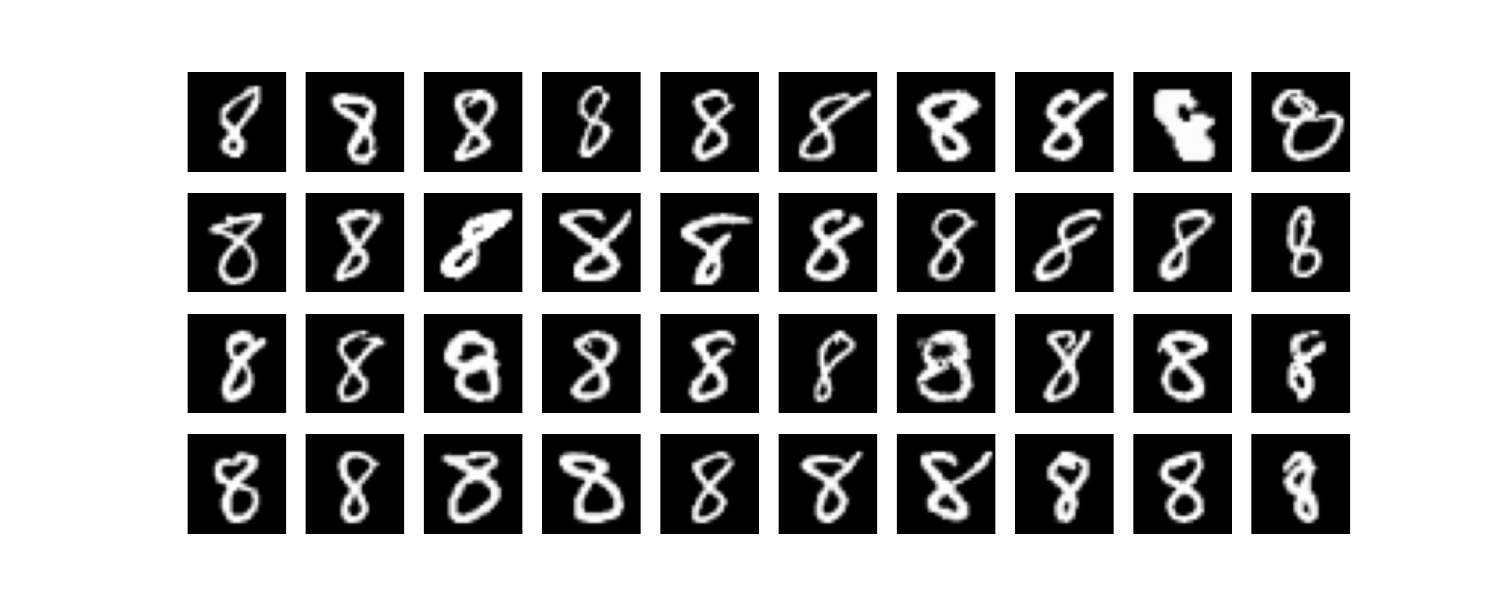}
    \caption{Reconstructed samples using the statistical attack for each dataset.}
  \end{subfigure}
\caption{Continuation of Figure \ref{fig.fl2}}
\label{fig.fl2c}
\end{figure}
\begin{figure*}[t!]
  \begin{subfigure}[t]{1.0\linewidth}
    \centering\includegraphics[width=130mm]{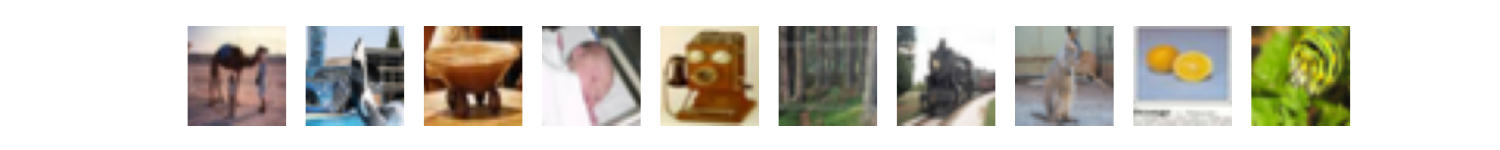}
    \centering\includegraphics[width=130mm]{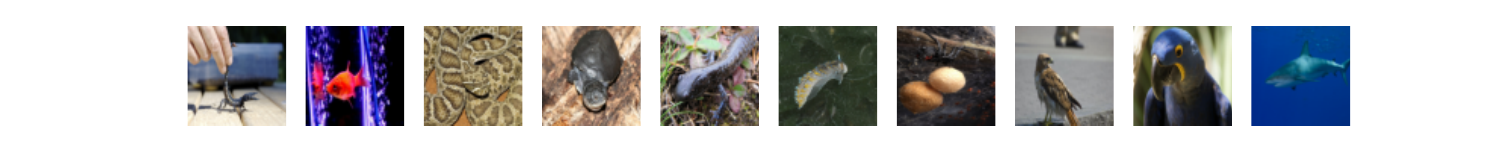}
    \captionsetup{justification=centering}
    \caption{Ground truth: $10$ training samples randomly selected from  CIFAR-100 (top panel) and Imagenet (Bottom panel).}
  \end{subfigure}
    \begin{subfigure}[t]{1.0\linewidth}
    \centering\includegraphics[width=130mm]{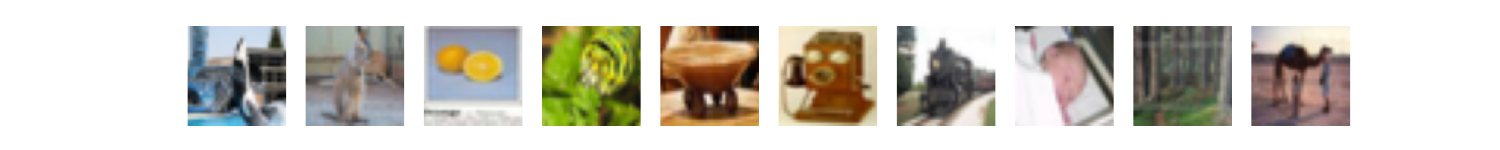}
    \centering\includegraphics[width=130mm]{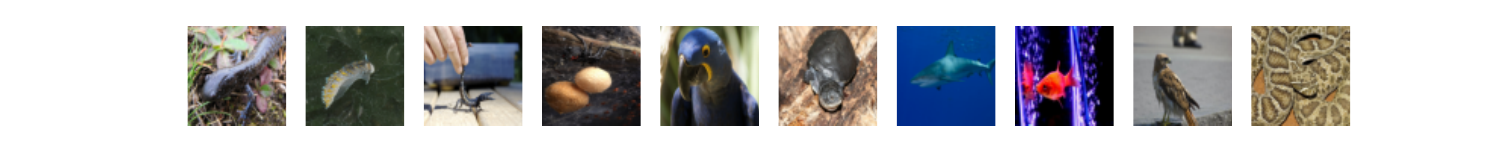}
    \caption{Reconstructed samples using the NS attack for each dataset.}
  \end{subfigure}
    \begin{subfigure}[t]{1.0\linewidth}
    \centering\includegraphics[width=130mm]{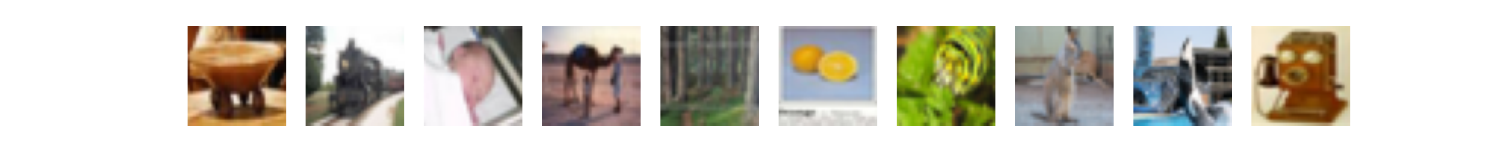}
    \centering\includegraphics[width=130mm]{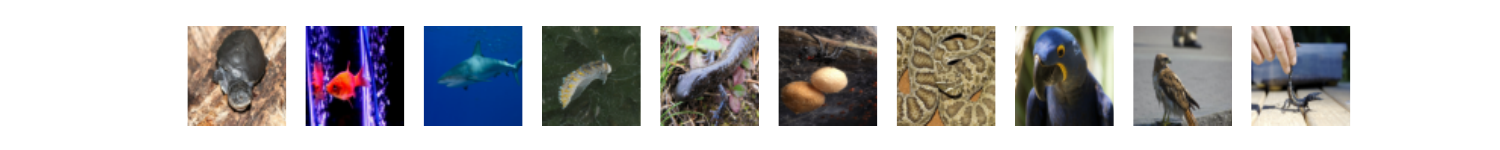}
    \caption{Reconstructed samples using the multivariate attack for each dataset.}
  \end{subfigure}
    \begin{subfigure}[t]{1.0\linewidth}
    \centering\includegraphics[width=130mm]{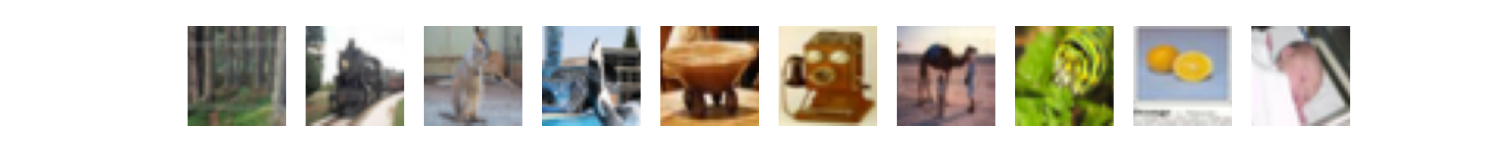}
    \centering\includegraphics[width=130mm]{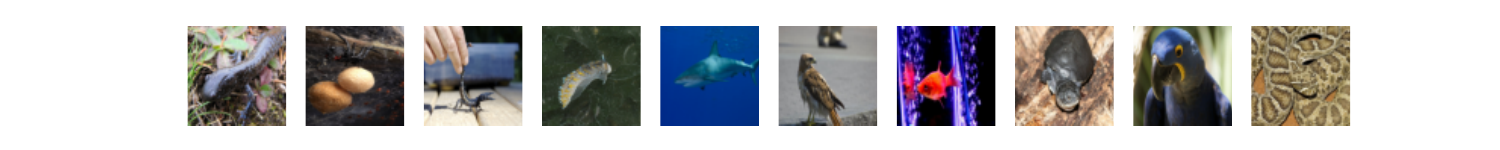}
    \caption{Reconstructed samples using the statistical attack for each dataset.}
  \end{subfigure}
\caption{Visualization examples of input reconstruction results using three mHSSP attacks for CIFAR-100 and Imagenet, respectively.}
\label{fig.imgnet}
\end{figure*}

\newpage

\begin{figure}
\begin{subfigure}[t]{1.0\linewidth}
    \centering\includegraphics[width=60mm]{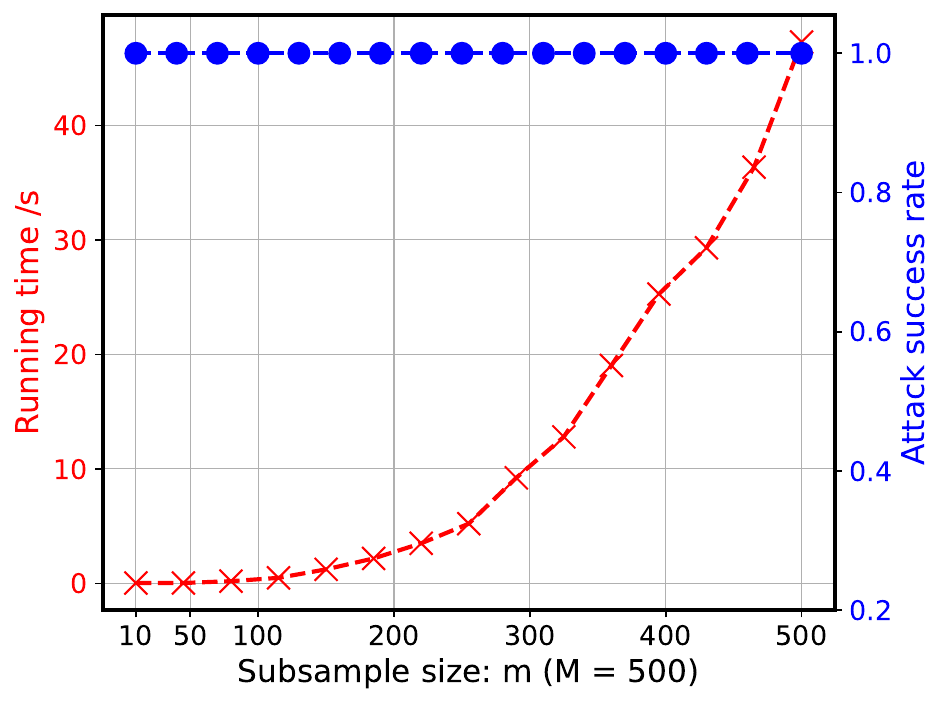}
    \vskip -4pt
    \caption{NS attack}
  \end{subfigure}
    \begin{subfigure}[t]{1.0\linewidth}
    \centering\includegraphics[width=60mm]{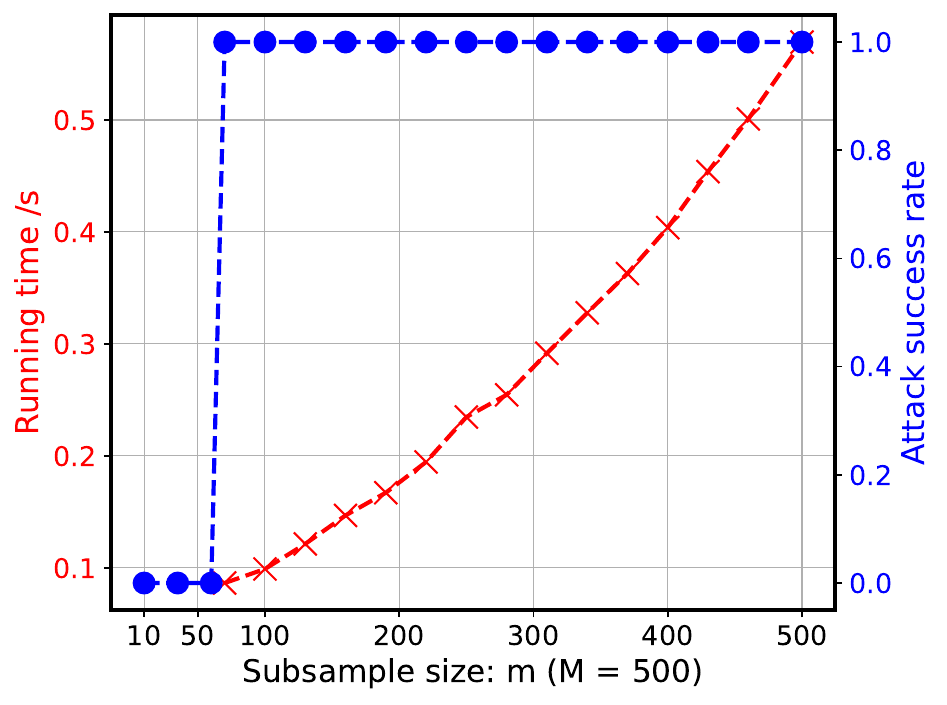}
    \vskip -4pt
    \caption{Multivariate attack}
  \end{subfigure}
\begin{subfigure}[t]{1.0\linewidth}
    \centering\includegraphics[width=60mm]{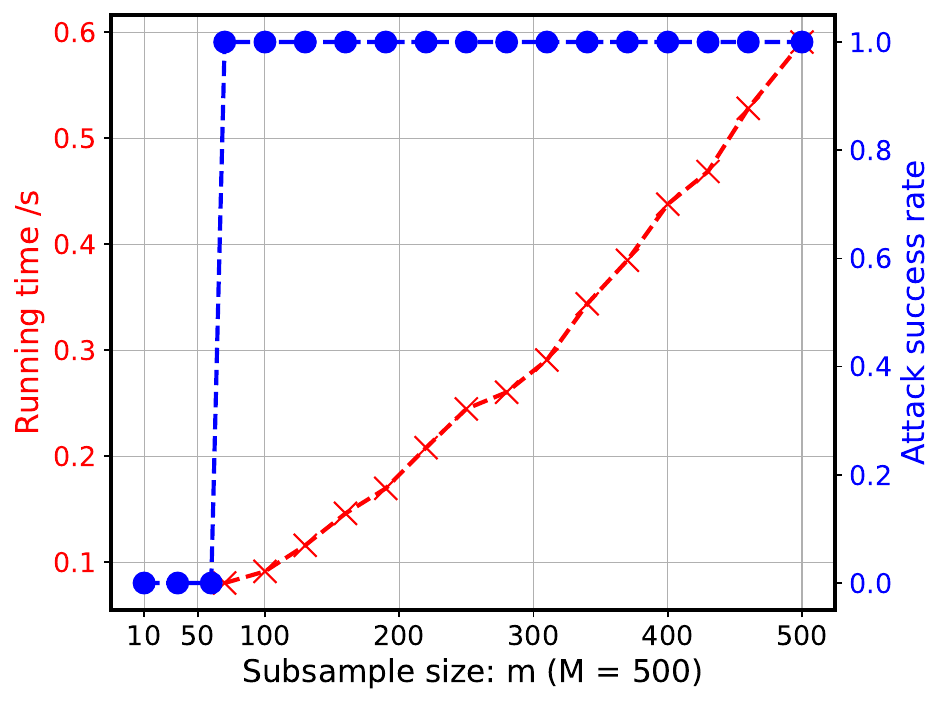}
    \vskip -4pt
    \caption{Statistical attack}
  \end{subfigure}
    \caption{ Running time (red lines) and attack success rate (blue lines) as a function of subsample size $m\leq M$ using three mHSSP attacks for MNIST dataset. }
    \label{fig.mHSSP_mnist}
\end{figure}

\begin{figure}
\begin{subfigure}[t]{1.0\linewidth}
    \centering\includegraphics[width=60mm]{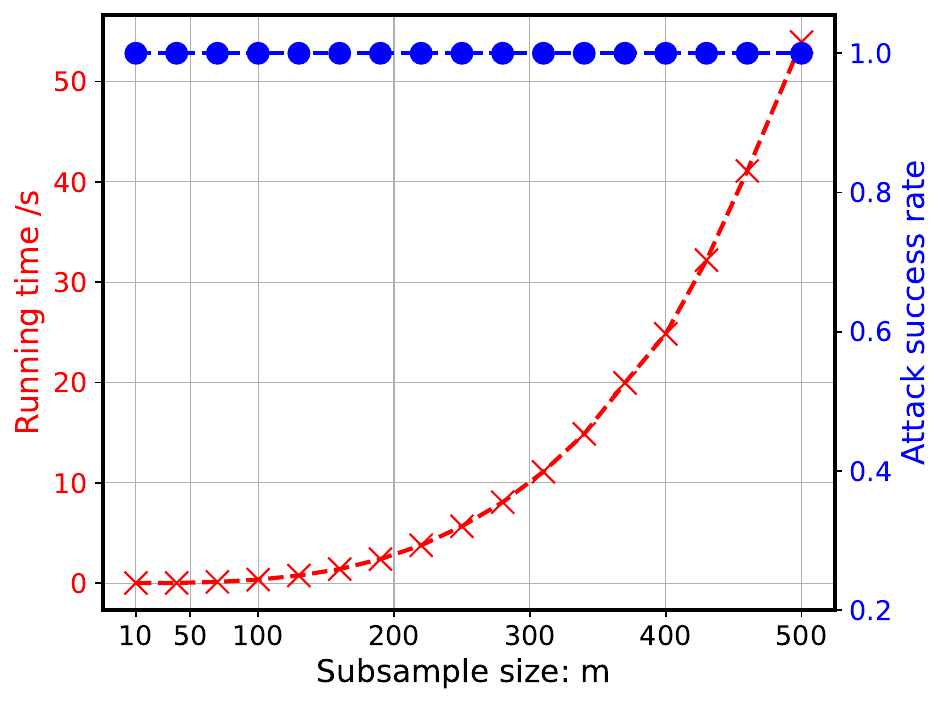}
    \vskip -4pt
    \caption{NS attack}
  \end{subfigure}
    \begin{subfigure}[t]{1.0\linewidth}
    \centering\includegraphics[width=60mm]{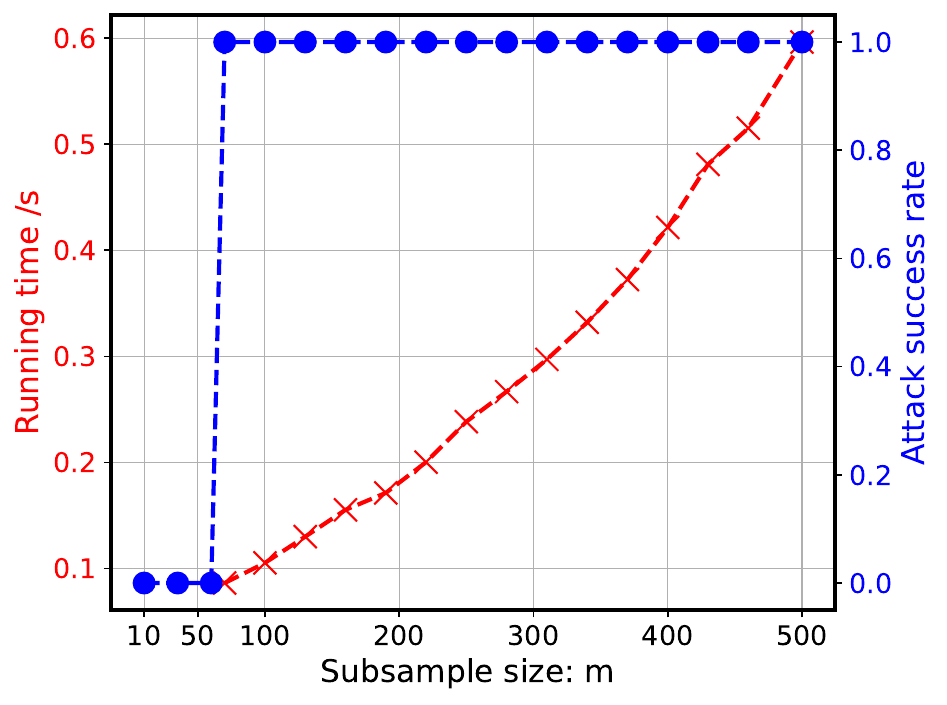}
    \vskip -4pt
    \caption{Multivariate attack}
  \end{subfigure}
\begin{subfigure}[t]{1.0\linewidth}
    \centering\includegraphics[width=60mm]{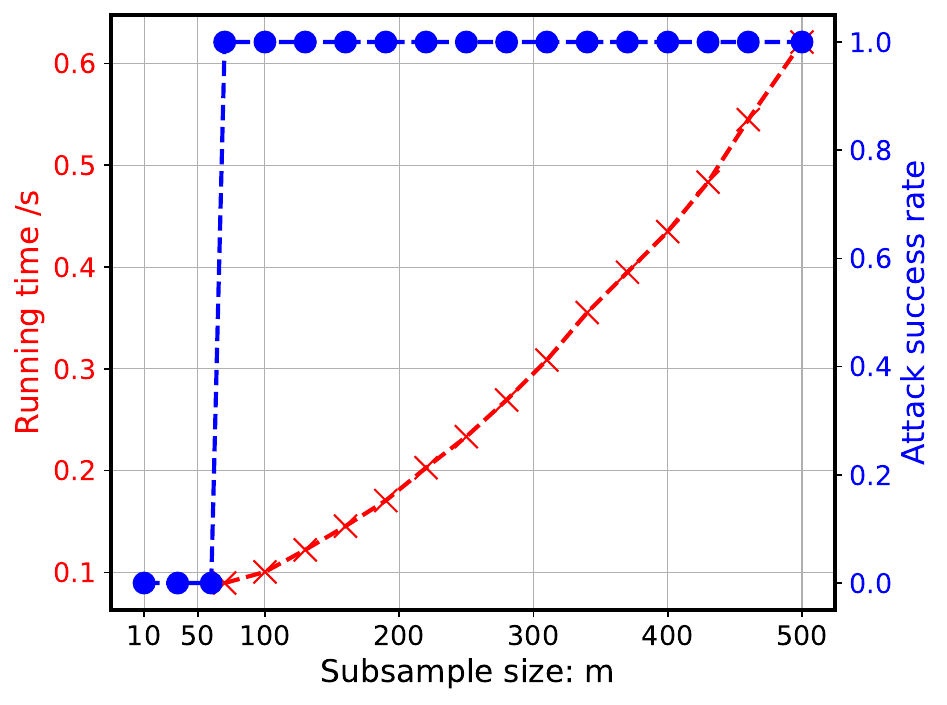}
    \vskip -4pt
    \caption{Statistical attack}
  \end{subfigure}
     \vskip -4pt
    \caption{ Running time (red lines) and attack success rate (blue lines) as a function of subsample size $m\leq M$ using three mHSSP attacks for Purchase dataset. }
    \label{fig.mHSSP_purchase}
    \vskip -4pt
\end{figure}

\end{document}